\documentclass[preprint,12pt]{elsarticle}
\usepackage{amsthm}
\usepackage{amsmath}
\usepackage{amsopn}   

\newcommand{\E}{\mathbb{E}}

\newcommand{\dd}{\mathrm{d}}

\DeclareMathOperator{\argmin}{arg\,min}

\theoremstyle{plain}

\newtheorem{proposition}{Proposition}
\newtheorem{lemma}{Lemma}
\theoremstyle{remark}

\usepackage{amssymb}
\usepackage{amsmath}
\usepackage{url}
\usepackage{xcolor}
\usepackage{bbm}

\usepackage{graphicx}
\usepackage{subcaption}

\journal{\textcolor{blue}{Energy Conversion and Management}}

\begin{document}

\begin{frontmatter}



\title{Weibull-Stationary Stochastic Differential Equations for Conditional Long-Horizon Wind Power Forecasting}


\author{
Luca Di Persio\footnotemark[1]\,   Email: luca.dipersio@univr.it \\
Mehrdad Ghadiri\footnotemark[1]\,  Email: mehrdad.ghadiri@studenti.univr.it}

\affiliation{organization={Department of Computer Science, College of Mathematics, University of Verona},
            addressline={Strada le Grazie 15}, 
            city={Verona},
            postcode={37134}, 
            state={Verona},
            country={Italy}}
 

\begin{abstract}

We present a one-month-ahead conditional probabilistic framework for wind-power forecasting at ten-minute resolution. Monthly Weibull shape and scale parameters are estimated from serially dependent SCADA wind-speed data, corrected through a Godambe covariance, and forecast by a heteroskedastic Kalman filter on a bivariate VAR(1) state-space model. Conditional on the MMSE forecasted Weibull invariant law, we construct and compare three positive wind-speed SDE models: an Ornstein-Uhlenbeck-Weibull transform, a Fokker-Planck drift-first diffusion, and a Fokker-Planck diffusion-first model. The simulated wind-speed ensembles are mapped to power through a calibrated XGBoost power curve. Applied to January 2021 data from a Senvion MM92 turbine at Kelmarsh Wind Farm, the three SDE formulations are statistically indistinguishable in probabilistic accuracy, with mean CRPS values between 1.569 and 1.575 m/s. The diffusion-first model is therefore preferred on computational grounds, reducing runtime by about a factor of seven relative to the OU-Weibull model. In the power domain, the Wasserstein distance between simulated and observed distributions is 26.1-27.6 kW, below $1.4\%$ of rated capacity, while the monthly energy-yield bias is about $-7.3\%$ for the examined month. Exceedance-probability errors remain below 1.6 percentage points over the 0-1500 kW range and about 2.2 percentage points near rated power. These quantities provide decision-relevant probabilistic inputs for downstream operational problems, rather than completed reserve, storage, market, or fatigue-optimization decisions. Full marginalisation over the Kalman predictive law of the Weibull parameters is left as a natural extension.

\end{abstract}











\begin{keyword}

Probabilistic wind power forecasting,
Weibull distribution,
Stochastic differential equations,
Monte Carlo simulation,
Long-term energy forecasting,
XGBoost power curve


\end{keyword}

\end{frontmatter}


\section{Introduction}
\label{sec:Introduction}

Wind energy has become one of the most important pillars of the global transition to clean and sustainable power. Over the past decade, we have witnessed remarkable growth in installed wind capacity, driven by continuous improvements in turbine technology, steadily declining costs, and the urgent societal need to reduce carbon emissions. A recent study published in \textit{Nature Energy} confirms that global wind power has 
expanded at a pace that now aligns with the growth trajectories required for major climate-mitigation goals, which underscores just how central wind energy has become to building a reliable and low-carbon energy future~\cite{JakhmolaWind2026}. As wind penetration continues to rise, accurate wind-power forecasting is no longer a luxury but a practical necessity. Grid operators depend on reliable forecasts to anticipate output fluctuations, maintain system stability, schedule reserves, and plan energy storage~\cite{SelvarajWind2026}. Without this foresight, the variability inherent in wind generation becomes a direct operational and financial burden.

What makes wind-power forecasting genuinely difficult is not simply the randomness of wind, but the fact that wind power exhibits highly nonlinear dynamics that make deterministic 
prediction increasingly unreliable as the forecast horizon grows~\cite{AI2023114222, ZHU2024118062}. This challenge is well recognised in the literature: as the forecasting horizon extends beyond a few hours, prediction errors accumulate at every step, and any deterministic model, regardless of its structural sophistication, begins to lose its grip on the true trajectory of the process~\cite{XIE2023105804, SAEED2025137979, DantasBrowell2026}. For short-term horizons of minutes to a few hours, statistical and machine learning models have demonstrated strong performance and remain the workhorses of operational forecasting~\cite{ARSLANTUNCAR2024197, GUO20253753}. More recently, hybrid approaches that combine stochastic differential equations with deep learning architectures have been proposed to improve short-term accuracy further by explicitly representing the stochastic nature of wind dynamics~\cite{electricity6030048}. However, once we move beyond short horizons, the picture changes fundamentally, and this is precisely the challenge that motivates our work.

For forecasting horizons of one month or more, we argue that the appropriate shift is away from deterministic multi-step trajectory prediction and toward probabilistic frameworks that characterise the distribution of possible future outcomes, rather than committing to a single realisation~\cite{DiPersio2024WindItaly,Ceresa2024}. This view is not merely our preference; it is supported by a growing body of evidence. Iversen et al.\ demonstrate rigorously that probabilistic models outperform deterministic machine learning methods in multi-step 
forecasting because they are able to capture inter-temporal uncertainty and error propagation across extended 
horizons~\cite{Iversen2017}. Similarly, Wu et al.\ emphasise that wind farm power generation is highly volatile and stochastic due to the influence of meteorological factors, which makes it essential to incorporate weather-driven uncertainty explicitly into any long-term forecasting model~\cite{11088843}. Also Ladopoulou et al. further demonstrate that the wind speed-to-power relationship itself exhibits input-dependent noise and non-stationary correlation structure across operating regimes, reinforcing the broader case that probabilistic formulations are essential at every stage of the wind power modelling chain \cite{ladopoulou2025probabilisticwindpowermodelling}. We find these arguments compelling, and they directly motivate the probabilistic architecture we develop in this paper.

Within the probabilistic modelling tradition, stochastic differential equations have emerged as a particularly attractive tool for wind-speed simulation because they provide a principled way to combine physically motivated dynamics with a prescribed stationary distribution. The foundational contribution in this direction was made by Z\'{a}rate-Mi\~{n}ano et al., who introduced a continuous-time stochastic wind-speed model based on an Ornstein-Uhlenbeck process combined with a memoryless transformation~\cite{ZARATEMINANO201342}. This elegant construction allows the generation of autocorrelated wind-speed trajectories that follow a Weibull stationary distribution, making the model well-suited for power system dynamic studies. Building on this foundation, the same authors later extended their framework by exploiting the Regression Theorem and the stationary Fokker-Planck equation to derive closed-form expressions for both drift and diffusion terms under a broad class of target distributions~\cite{ZARATEMINANO2016186}. Arenas-L\'{o}pez and Badaoui subsequently advanced this line of work by evaluating multiple probability density functions within both the Ornstein-Uhlenbeck memoryless-transform and Fokker-Planck frameworks. They also emphasized that accurate stochastic simulation of the wind-speed process is the central modelling task, since power output is then obtained as a deterministic consequence of the wind-to-power map~\cite{ARENASLOPEZ2020113152, ARENASLOPEZ2020118842}.

Despite this progress, we believe that a fundamental limitation 
persists throughout this line of research, and it has not been adequately addressed in the existing literature. In 
essentially all prior work, the parameters of the wind speed 
probability distribution are estimated from historical observations and then held fixed, meaning that these models reflect past wind conditions rather than anticipating future variability. Furthermore, the final output is typically reduced to a single point estimate of energy production, rather than being embedded within a formal probabilistic forecasting framework that communicates uncertainty in a decision-relevant way. The distributional parameters are treated as constants over a given historical period, which means that important temporal dynamics, including seasonal fluctuations, inter-monthly variability, and the propagation of estimation 
uncertainty through the simulation chain, are ignored. 
We find this particularly problematic for operational applications such as reserve scheduling, storage sizing, and maintenance planning, where it is precisely the uncertainty about future conditions, not the characterisation of past ones, that drives decision making.

This paper addresses that gap through a conditional probabilistic pipeline whose numerical core is deliberately conditioned on the MMSE forecasted Weibull invariant law. The first stage estimates monthly Weibull shape and scale parameters from SCADA wind-speed observations and corrects the covariance of the monthly MLEs for serial dependence. The second stage models the joint log-parameter dynamics through a heteroskedastic Kalman state-space VAR(1), thereby producing a predictive law for the next-month Weibull parameters and, in particular, its MMSE value $(\hat{k}_{49},\hat{\lambda}_{49})$. The third stage conditions on this forecasted invariant law and constructs three positive wind-speed processes with Weibull stationary distribution: an Ornstein-Uhlenbeck-Weibull transport process, a Fokker-Planck drift-first diffusion, and a Fokker-Planck diffusion-first diffusion. The fourth stage pushes the simulated wind-speed ensemble through a calibrated deterministic XGBoost power curve. Thus, parameter uncertainty is quantified at the Kalman layer, while the numerical wind-speed and power forecasts reported in this paper propagate the conditional SDE path uncertainty under the selected invariant law. Full marginalisation over the Kalman predictive distribution of $(k,\lambda)$ is mathematically natural, but is not part of the present numerical study.

The remainder of this paper is organised as follows. Section~2 describes the Kelmarsh dataset and turbine specifications. Section~3 presents the Weibull parameter estimation and state-space forecasting stage. Section~4 derives and calibrates the three SDE formulations and evaluates their wind-speed simulation performance. Section~5 presents the probabilistic 
wind-power results and their operational implications. 
Section~6 concludes and identifies directions for future research.

Before proceeding, it is useful to make explicit the probabilistic composition underlying the paper. The historical SCADA record is first mapped into a Kalman predictive law for the Weibull parameter vector. In the numerical study, this law is summarized by its MMSE value, which defines the invariant Weibull distribution imposed on the SDE simulations. The corresponding conditional wind-speed ensemble is then pushed forward through the fitted deterministic power curve. The latter is the precise sense in which the reported power forecast is probabilistic: it is a conditional probabilistic forecast given the forecasted invariant Weibull law and the fitted power map.

\section{Case Study, Dataset and Preprocessing}

To analyse and validate the framework, we use SCADA data from the Kelmarsh Wind Farm in Northamptonshire, United Kingdom, publicly available through the Zenodo repository~\cite{zenodo_dataset}. The dataset covers six 2.05 MW Senvion MM92 turbines from January 2017 through June 2021, and it contains ten-minute measurements of hub-height wind speed, active power, rotor speed, blade pitch angle, nacelle orientation, and related operational variables. We focus on Turbine 1 as a representative and conveniently documented test case. This choice is not meant to establish turbine-level universality; the conclusions below should therefore be read as a proof of concept for this turbine and month, with cross-turbine validation left for subsequent work.

Since the raw SCADA data inevitably contains abnormal 
observations arising from sensor faults, communication dropouts, maintenance events, and curtailment periods~\cite{MORRISON2022473}, we apply a systematic four-step cleaning pipeline before any statistical modelling is performed. In the first step, we carry out physical range filtering, discarding any observation that violates the known 
operating envelope of the Senvion MM92, such as wind speeds outside the admissible range, negative power values, or power readings exceeding rated capacity~\cite{Gill2012}. In the second step, we perform abnormal pattern detection using the 
color-space image-based algorithm proposed by Long 
et al.~\cite{LONG2022118594}, which constructs a 
three-dimensional wind power curve image from wind speed, power output, and data frequency, and then applies Canny edge detection and Hough transform to identify and isolate stacked outliers that would otherwise remain hidden within the physical range. In the third step, we remove curtailment and maintenance periods by flagging intervals in which the power output is zero while the wind speed exceeds the cut-in threshold of 3.0~m/s. As Morrison et al.~\cite{MORRISON2022473} discuss, these observations are better interpreted as intentional shutdowns or curtailment events rather than genuine calm periods, and they must therefore be treated as contextual anomalies 
before any power curve analysis can proceed. In the fourth and final step, we substitute the remaining missing power values using an XGBoost regression model trained exclusively on the clean, retained observations from the previous three steps, using the concurrent wind speed recorded by the SCADA system as the primary input. Since wind power is a deterministic function of wind speed through 
the turbine power curve~\cite{Burton2011}, the model is able to recover a physically consistent power estimate for each missing interval without requiring any information from the forecast period. 
It is important to note that the imputed power values are used solely to preserve continuity in descriptive plots and time-series visualisations. The final power-curve XGBoost model and all reported validation metrics, including RMSE, MAE, and $R^2$, are computed exclusively on the observed clean power values retained after the first three cleaning steps.

\begin{figure}[htbp]
    \centering
        \includegraphics[width=9 cm, height=6.5 cm]
        {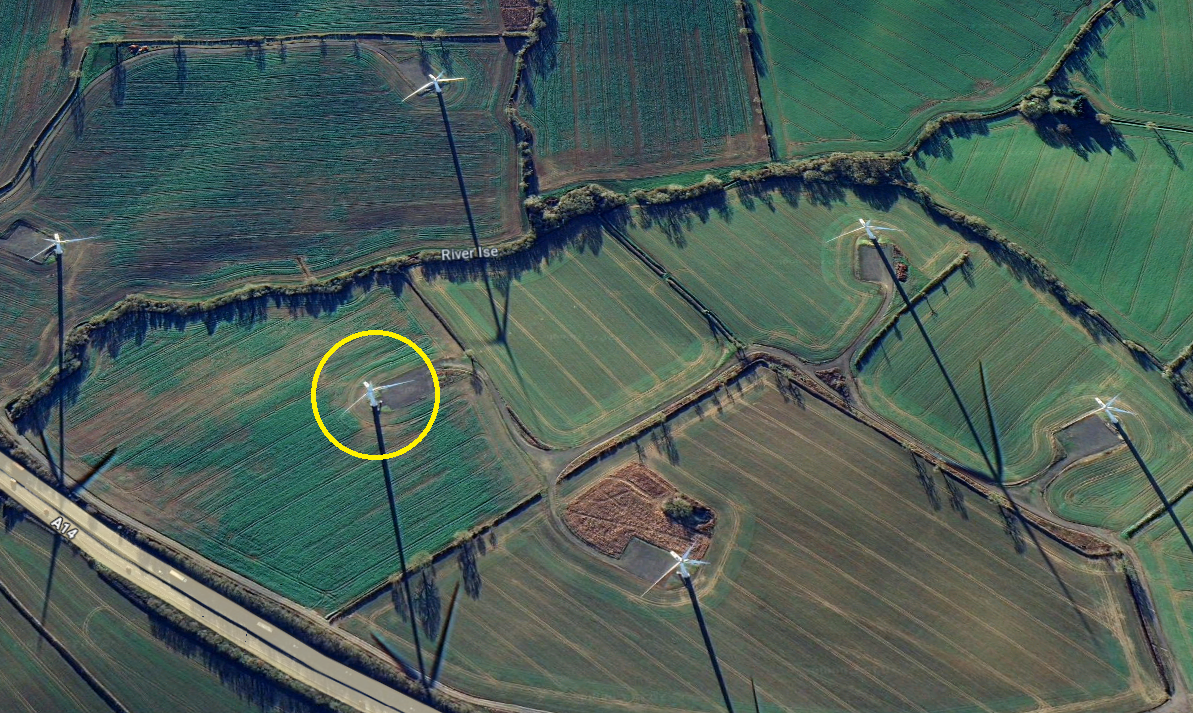}
        
        \caption{\small Zoomed-in satellite view showing the Kelmarsh Wind Farm\cite{KelmarshWindFarm}. }
        \label{fig:KelmarshWindFarm}

\end{figure}

\begin{table}[ht]
\centering
\caption{Wind Turbine specifications. \cite{windturbinemodels_senvionmm92}}
\label{tab:turbine_specs}

\resizebox{\textwidth}{!}{
\begin{tabular}{lcccccc}
\hline
Parameter & Rated Power (kW) & Cut-in Speed (m/s) & Rated Speed (m/s) & Rotor Diameter (m) & Swept Area (m$^2$) & Blades \\
\hline
Value     & 2,050.0 & 3.0 & 12.5 & 92.5 & 6,720.0 & 3 \\
\hline
\end{tabular}
}

\end{table}

\section{Forecasting of Weibull Parameters $(k_m, \lambda_m)$}
\label{sec:weibull}

\paragraph{Weibull Distribution}

The Weibull distribution is the established probabilistic model for wind speed, validated across a wide range of geographical and meteorological conditions \cite{seguro2000modern, carta2009review}. Its two parameters carry direct physical interpretations: the scale parameter governs mean wind resource intensity, while the shape parameter encodes variability around that mean \cite{ALJEDDANI202356}.  Capturing their joint temporal evolution is therefore essential for monthly forecasting.

\paragraph{Parameter estimation of Weibull Distribution}

Let $\mathcal{T}_m$ denote the set of ten-minute wind-speed observations recorded in month $m$. Within each monthly period, wind speed is modelled as a Weibull random
variable with shape parameter $k_m > 0$ and scale parameter $\lambda_m > 0$, whose
density is
\begin{equation}
    f(v;\,k,\lambda)
    = \frac{k}{\lambda}
      \left(\frac{v}{\lambda}\right)^{k-1}
      \exp\!\left[-\left(\frac{v}{\lambda}\right)^k\right],
    \qquad v \ge 0.
    \label{eq:weibull-density}
\end{equation}
The parameters are estimated by maximum likelihood, maximising the log-likelihood
\begin{equation}
    \log L(k,\lambda)
    = \sum_{t \in \mathcal{T}_m}
      \left[
        \log k - \log\lambda
        + (k-1)\log\!\left(\frac{v_t}{\lambda}\right)
        - \left(\frac{v_t}{\lambda}\right)^k
      \right].
    \label{eq:mle}
\end{equation}
Wind-speed observations within a month are serially correlated rather than independent \cite{RAMIREZ20052419}, so the classical MLE covariance formula, which assumes independent and identically distributed observations, underestimates the true estimation uncertainty. Under standard weak-dependence conditions (stationarity and strong mixing), the MLE remains consistent and asymptotically normal, but its covariance is described by the sandwich, or Godambe, matrix rather than the inverse Fisher information alone \cite{White1982}. Concretely, if $\theta_m = (k_m, \lambda_m)^\top$ denotes the true parameter vector,
the MLE $\widehat{\theta}_m$ satisfies
\begin{equation}
   \sqrt{|\mathcal{T}_m|}\,\bigl(\widehat{\theta}_m - \theta_m^\star\bigr)
   \;\Rightarrow\;
   \mathcal{N}\!\Bigl(0,\;
   I(\theta_m^\star)^{-1}\,J(\theta_m^\star)\,I(\theta_m^\star)^{-1}
   \Bigr),
   \label{eq:sandwich}
\end{equation}

where
\begin{equation}
I(\theta) = -\mathbb{E}\left[\nabla^2 \log f(V_t;\theta)\right]
\end{equation}

\begin{equation}
J(\theta) = \sum_{\ell \in \mathbb{Z}} 
\mathrm{Cov}\!\left(S_t(\theta), S_{t+\ell}(\theta)\right),
\qquad
S_t(\theta)=\nabla_\theta\log f(V_t;\theta).
\end{equation}

Here, $I(\theta)$ is the Fisher information matrix and $J(\theta)$ is the long-run covariance of the score process, accounting for serial dependence in the observations. The additional factor $J(\theta)$ relative to the standard formula inflates the variance of the MLE, reflecting the reduced effective sample size caused by temporal autocorrelation. In practice, $J(\theta)$ is estimated by the Newey-West HAC estimator with a Bartlett kernel and 
bandwidth $L = \lfloor 1.1447(\hat{\rho}^2|T_m|)^{1/3}\rfloor$, where 
$\hat{\rho}$ is the first-order autocorrelation of the score; autocovariances are normalised by $|T_m|$ throughout to avoid variance inflation at large lags 
\cite{Andrews1991}.

\paragraph{Log-space transformation}

All subsequent modelling is performed on the log-transformed parameter vector $\eta_m = (\log k_m,\, \log \lambda_m)^\top$ rather than on the original scale. Working in log-space serves two purposes: it maps the strictly positive parameters onto the
entire real line, removing the positivity constraint from the time-series model, and it stabilises the variance of the estimates across months with different sample sizes.
Propagating the MLE covariance to the log scale via the delta method gives
\begin{equation}
    \mathrm{Var}(\widehat{\eta}_m)
    \;\approx\;
    |\mathcal{T}_m|^{-1}\,
    D(\widehat{\theta}_m)\,
    \widehat{I}(\widehat{\theta}_m)^{-1}\,
    \widehat{J}(\widehat{\theta}_m)\,
    \widehat{I}(\widehat{\theta}_m)^{-1}\,
    D(\widehat{\theta}_m)^\top
    \;\equiv\;
    |\mathcal{T}_m|^{-1}\,\widehat{V}_m,
    \label{eq:delta-method}
\end{equation}
where $D(\widehat{\theta}_m) = \mathrm{diag}(1/\widehat{k}_m,\, 1/\widehat{\lambda}_m)$
is the Jacobian of the log transformation evaluated at the MLE. The month-specific matrix $|\mathcal{T}_m|^{-1}\hat{V}_m$ — not $\hat{V}_m$ itself — is used as the observation noise covariance in the state-space model below, making the downstream filtering step fully aware of the varying estimation precision across months.

\paragraph{State-space model and Kalman filtering}
The forty-eight monthly log-parameter vectors $\widehat{\eta}_1, \dots, \widehat{\eta}_{48}$ are treated as noisy observations of a latent bivariate process that evolves according to a vector autoregression of order $p$. A VAR model is preferred over two independent univariate models for two complementary reasons. First, $k_m$ and $\lambda_m$ are empirically correlated across months: a month with unusually high mean wind speed typically exhibits a different variability structure as well, so modelling them jointly captures the cross-parameter dependence and produces forecasts that are internally consistent. Second, with only 48 observations, the added complexity of a VARMA specification — which requires the simultaneous identification and estimation of both autoregressive and moving-average coefficient matrices — leads to over-parameterization and numerical instability in small samples; the simpler VAR structure avoids these difficulties while retaining sufficient flexibility to characterise the joint dynamics of the two parameters \cite{Lutkepohl2005}.

To enable Kalman filtering, the VAR($p$) model is cast in companion form by stacking the current and $p-1$ lagged log-parameter vectors into an augmented state vector
$x_m \in \mathbb{R}^{2p}$. The system then takes the standard linear Gaussian
state-space form, with a state transition equation
\begin{equation}
    x_m = c + F\,x_{m-1} + u_m,
    \qquad u_m \sim \mathcal{N}(0,\,Q),
    \label{eq:state}
\end{equation}
where the companion matrix $F$ encodes the autoregressive coefficients at all lags, the intercept vector $c$ absorbs the unconditional mean, and the process noise covariance $Q$ is nonzero only in the block corresponding to the current innovation, and an observation
equation
\begin{equation}
    \widehat{\eta}_m = H\,x_m + e_m,
    \qquad
    e_m \sim \mathcal{N}\!\left(0,\; |\mathcal{T}_m|^{-1}\widehat{V}_m\right),
    \label{eq:obs}
\end{equation} 
where $H = (I_2,\,0,\ldots,0)$ selects the current log-parameter pair from the state
vector. The observation noise covariance equation \eqref{eq:obs} is heteroskedastic: it varies month 
by month according to the scaled delta-method sandwich estimate 
$|\mathcal{T}_m|^{-1}\hat{V}_m$, which
means that months with many observations and low serial correlation receive greater weight
in the filter than months with few observations or strong dependence. This is the reason why the Kalman filter is preferred here over a direct least-squares fit to the MLE series: ordinary least squares treats all monthly estimates as equally reliable, whereas the Kalman filter weights each observation by its actual precision.

Given the fitted model parameters, the Kalman filter recursions propagate the posterior
mean and covariance of the latent state forward through the forty-eight months of
training data. The one-step-ahead forecast of the log-parameter vector for the target
month is
\begin{equation}
    \widehat{x}_{m+1|m} = c + F\,\widehat{x}_{m|m},
    \label{eq:forecast}
\end{equation}
and the corresponding forecast of the Weibull parameters is recovered by
back-transformation,
\begin{equation}
    \bigl(\widehat{k}_{m+1},\;\widehat{\lambda}_{m+1}\bigr)
    =
    \Bigl(
    \exp\!\bigl(\widehat{\eta}_{1,\,m+1|m}\bigr),\;
    \exp\!\bigl(\widehat{\eta}_{2,\,m+1|m}\bigr)
    \Bigr).
    \label{eq:backtransform}
\end{equation}
The exponential back-transformation in equation~\eqref{eq:backtransform} guarantees that the forecasted shape and scale parameters are strictly positive regardless of the value of the latent state, so that equation~\eqref{eq:weibull-density} always defines a proper probability density. The forecasted pair $(\widehat{k}_{m+1}, \widehat{\lambda}_{m+1})$ is then passed to the stochastic differential equation simulation stage, where it prescribes the target invariant distribution that each simulated wind-speed trajectory must respect.

\subsection{Analysis of Estimated  Weibull Parameters}

Before analysing the estimation of the Weibull parameters and the corresponding probability distribution for January 2021, it is useful to first examine the January wind-speed distributions from 2017-2020, shown in Figure~\ref{fig:weibullDistribution}. Although the fitted Weibull parameters vary from year to year, the model reproduces the observed wind-speed distributions well. These variations reveal differences in both the characteristic wind strength, described by the scale parameter $\lambda$, and the regularity or variability of wind speeds, described by the shape parameter $k$, across the January months.

\begin{figure}[h!]
\centering
\includegraphics[width=1.0\linewidth]{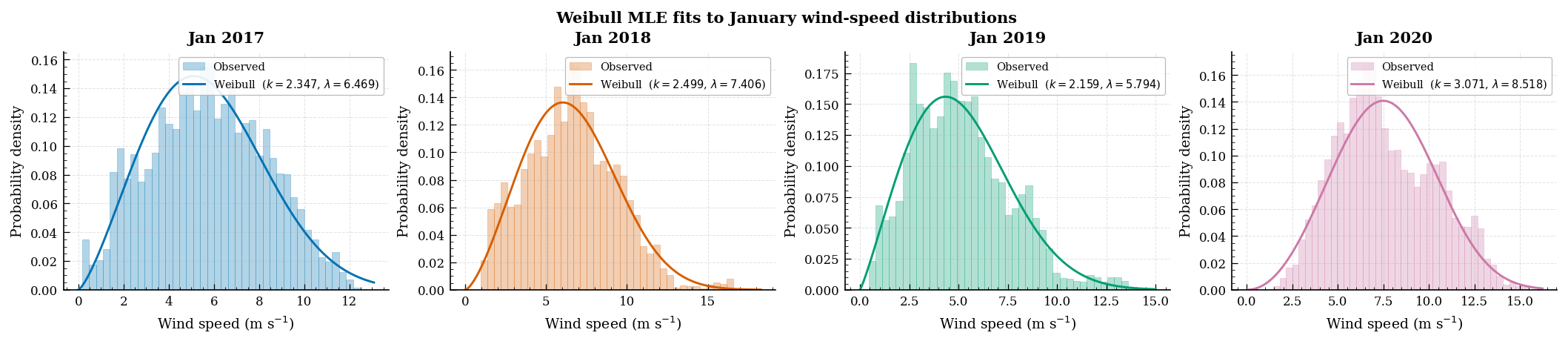}
\caption{Weibull Distribution across January months (2017-2020).}
\label{fig:weibullDistribution}
\end{figure}

To estimate the Weibull parameters for January 2021, we adopt a bivariate VAR(1) model for the joint log-parameter dynamics. This choice is motivated by two considerations: first, it is the minimal specification that captures cross-parameter dependence between 
$\log \hat{k}_m$ and $\log \hat{\lambda}_m$; second, with only $M = 48$ monthly observations, higher-order VAR or VARMA specifications introduce more free parameters than the sample can reliably support \cite{Lutkepohl2005}.

\paragraph{State-Space VAR(1) Forecast and Kalman Filter Output}

\begin{figure}[h!]
\centering
\includegraphics[width=0.75\linewidth]{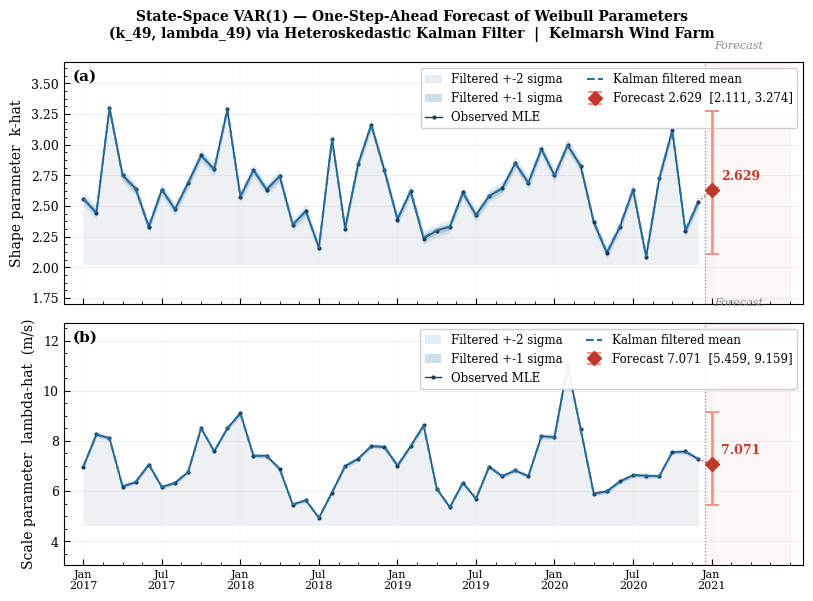}
\caption{State-space VAR(1) one-step-ahead Kalman filter forecast of Weibull shape $\hat{k}_{49}$ (a) and scale $\hat{\lambda}_{49}$ (b) parameters for January 2021.} 
\label{fig:forecast}
\end{figure}

Figure~\ref{fig:forecast} presents the one-step-ahead forecast of the Weibull parameters for January 2021, produced by the heteroskedastic Kalman filter described in Section \ref{sec:weibull}. The Kalman filtered mean tracks
the observed MLE series closely throughout the training period, with the posterior standard deviation remaining consistently small — approximately 0.01 in log-space — relative to the natural month-to-month variability of the series. This behaviour is a consequence of the high observation
precision associated with the large monthly sample sizes (approximately 4,000 ten-minute intervals per month), which renders the MLE estimates highly reliable \cite{lehmann1998theory} and causes the filter to assign dominant weight to the observations at each update step \cite{Kalman1960}.

The one-step-ahead Kalman prediction for January~2021, obtained by propagating the filtered state at month~48 forward through the VAR(1) transition equation~\eqref{eq:state} without access to any January~2021 observations, is recovered via the exponential back-transformation equation~\eqref{eq:backtransform}, 
with $95\%$ prediction intervals of $[2.111,\,3.274]$ and 
$[5.459,\,9.159]$~m\,s$^{-1}$ for the shape and scale parameters respectively. The width of these intervals reflects genuine uncertainty about the January wind 
regime arising from the VAR(1) process noise covariance $Q$, which governs month-to-month parameter dynamics, rather than from observation noise, which the posterior has effectively collapsed to approximately $0.01$ in log-space. The scale parameter interval spans approximately $3.7$~m\,s$^{-1}$, corresponding to a relative uncertainty of roughly $52\%$ of the forecast value, 
consistent with the interannual variability visible in the historical record~\citep{CARTA2009933}.

\paragraph{Kalman Filter Uncertainty Structure}

\begin{figure}[h!]
\centering
\includegraphics[width=0.75\linewidth]{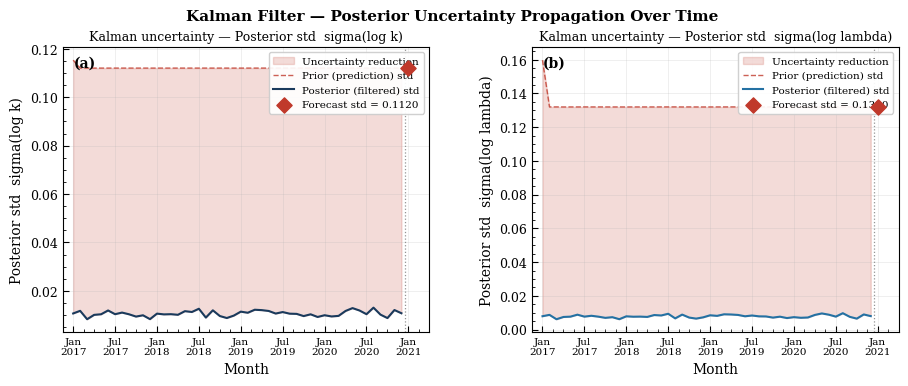}
\caption{Prior and posterior standard deviations of $\log\hat{k}$ (a) and $\log\hat{\lambda}$ (b) throughout the Kalman filter recursion.} 
\label{fig:uncertainty}
\end{figure} 

Figure~\ref{fig:uncertainty} illustrates the evolution of the prior and posterior standard deviations of the log-space parameter estimates throughout the filtering recursion. A striking feature of this figure is the large gap between the prior standard deviation, approximately 0.11 for log\,$\hat{k}$ and 0.13 for log\,$\hat{\lambda}$, and the posterior standard deviation of approximately 0.01 in both cases. This gap represents the uncertainty removed by each monthly observation, and its magnitude reflects the high precision of the monthly MLE estimates. The
forecast standard deviations at month 49, which equal 0.112 and 0.130 for log\,$\hat{k}$ and log\,$\hat{\lambda}$ respectively, are determined almost entirely by the VAR(1) process noise covariance $Q$ rather than by residual observation uncertainty, since the posterior has effectively collapsed the latter. This behaviour validates the theoretical design of the pipeline: the predictive covariance $\mathbf{P}_{49|48}$ quantifies the month-to-month uncertainty in the forecasted Weibull parameters, while the numerical SDE experiments below condition on the MMSE value of this predictive law.
\paragraph{Selection of the Point Forecast as the Target Invariant Law}

In the numerical SDE study, the wind-speed simulation is conditioned on the single parameter pair $(\hat{k}_{49},\hat{\lambda}_{49})=(2.6272,7.0691\,\mathrm{m\,s}^{-1})$ rather than on the full Kalman predictive distribution $\Pi_{49}=\mathcal N(\hat{\boldsymbol\eta}_{49|48},\mathbf P_{49|48})$, which is a plug-in decision. If the parameter vector used in the simulation is evaluated under posterior squared-error loss, the optimal representative on the log scale is
\begin{equation}
\boldsymbol\eta^*=\argmin_{\boldsymbol\eta}\,\mathbb E\!\left[\|\boldsymbol\eta-\boldsymbol\eta_{49}\|^2\mid \mathcal F_{48}\right]
=\mathbb E[\boldsymbol\eta_{49}\mid\mathcal F_{48}]
=\hat{\boldsymbol\eta}_{49|48}.
\end{equation}
For a Gaussian predictive law, this representative also coincides with the posterior mode and componentwise median. This argument justifies the chosen invariant law as a point decision, but it does not turn the subsequent SDE ensemble into a fully marginal forecast over parameter uncertainty. The ensemble generated below should therefore be interpreted as the conditional law of the wind-speed process given the MMSE forecasted Weibull invariant distribution. The full predictive mixture would draw $(k,\lambda)$ from $\Pi_{49}$ before each SDE simulation, and is left as a natural extension because it would require a separate numerical study.

\paragraph{Predictive Weibull Distribution}

\begin{figure}[h!]
\centering
\includegraphics[width=0.75\linewidth]{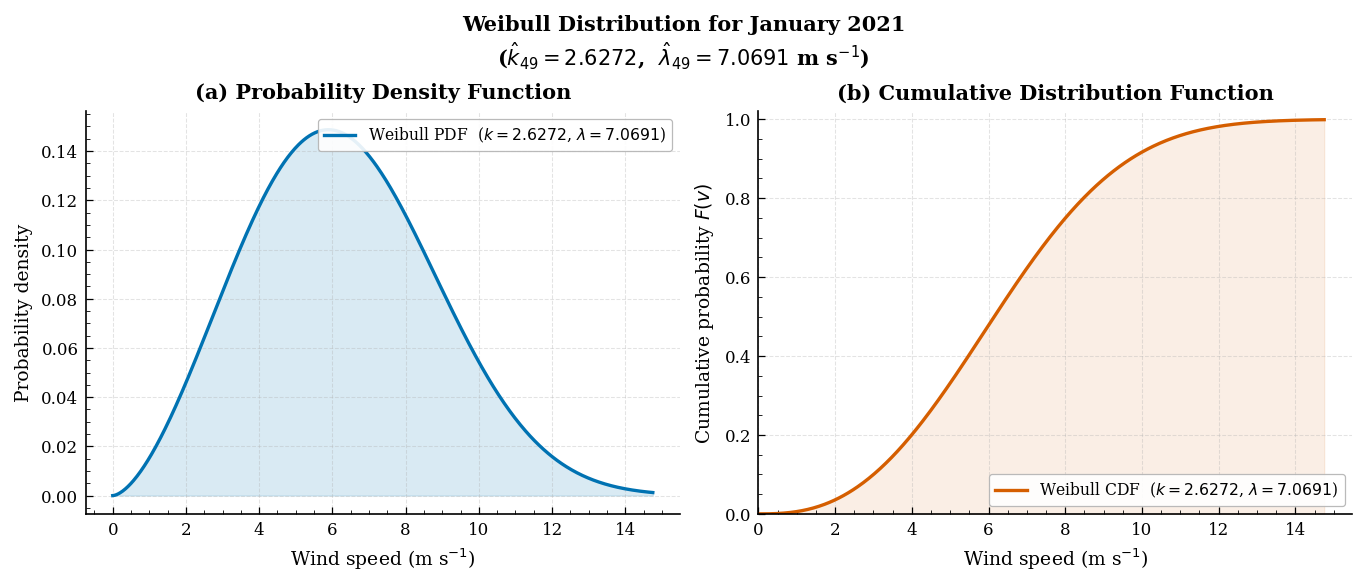}
\caption{Diagnostic display of the forecasted Weibull distribution 
for January 2021 at the MMSE point forecast 
$(\hat{k}_{49}, \hat{\lambda}_{49}) = (2.6272,\ 7.0691\ \mathrm{m\,s}^{-1})$.}
\label{fig:weibull}
\end{figure} 

Figure~\ref{fig:weibull} presents the probability density function and cumulative distribution function of the forecasted Weibull distribution for January 2021, evaluated at the MMSE point forecast $(\hat{k}_{49},\,\hat{\lambda}_{49}) = (2.6272,\,7.0691\,\text{m\,s}^{-1})$. The shape parameter $\hat{k}_{49} = 2.6272$ places the distribution in 
the bell-shaped regime of the Weibull family, with probability mass concentrated between approximately $2$ and $12\,\text{m\,s}^{-1}$ and a modal wind speed near $5\,\text{m\,s}^{-1}$, while the CDF confirms 
that the median lies at approximately $6.3\,\text{m\,s}^{-1}$ and that the distribution approaches unity smoothly beyond $13\,\text{m\,s}^{-1}$, consistent with the moderate wind regime at the site. This parameter pair is adopted as the prescribed invariant law for the SDE simulation stage, where the stochastic simulation of the wind speed process is conditioned on $\mathrm{Weibull}(\hat{k}_{49},\,\hat{\lambda}_{49})$, so that the SDE stage starts from a common target marginal law and the temporal dependence scale is then set by the calibrated mean-reversion parameter $\hat{\alpha}$.

\section{Wind Speed Modeling and Analysis}
\label{sec:sde}

In the previous section, we estimated the monthly Weibull parameters from historical SCADA data and applied a heteroskedastic Kalman filter to produce a single MMSE point forecast of the Weibull parameter pair $(\hat{k}_{49}, \hat{\lambda}_{49}) = (2.6272,\ 7.0691\ \text{m\,s}^{-1})$ 
for January 2021. The framework is best understood as a composition of three probabilistic maps. The first map takes the historical SCADA record and produces a joint Kalman predictive law for the next-month Weibull parameters. In the present numerical study, this law is summarized by its 
MMSE value $(\hat{k}_{49}, \hat{\lambda}_{49}) = (2.6272,\ 7.0691\ \text{m\,s}^{-1})$, which defines the forecasted invariant Weibull distribution used in the SDE simulations. The second map, which is the subject of the present section, constructs positive stochastic wind-speed processes whose stationary distribution is exactly this forecasted Weibull 
law. The third map pushes the simulated wind-speed ensemble through the fitted deterministic power curve. 
The numerical forecasts reported here are conditional
probabilistic forecasts given the MMSE forecasted Weibull invariant law; full marginalisation over
the Kalman predictive law of $({k},{\lambda})$ is left for future work.
The Weibull parameter forecast $(\hat{k}_{49}, \hat{\lambda}_{49})$ 
produced in Section~\ref{sec:weibull} characterises the marginal distribution of wind speed during January 2021, but does not constitute a complete probabilistic forecast of wind power. Power system applications, including reserve scheduling, storage sizing, and fatigue load estimation, depend on the temporal structure of the wind process, specifically on the persistence of high-wind and low-wind episodes and the autocorrelation of consecutive observations \cite{Haessig2015}. The three SDE formulations developed in this section are three realisations of the second map in the composition introduced at the end of Section~\ref{sec:Introduction}. These are the Ornstein-Uhlenbeck-Weibull model, the Fokker-Planck drift-based model, and the Fokker-Planck 
diffusion-first model, each constructing a positive wind speed 
process whose stationary distribution coincides with the forecasted Weibull invariant law $\mathrm{Weibull}(\hat{k}_{49}, \hat{\lambda}_{49})$ while differing in the specification of the drift and diffusion coefficients.

The three SDE formulations developed below are not independent 
recipes. They are three ways of realising the second map of the forecasting pipeline, each constructing a positive diffusion process whose stationary distribution coincides with the forecasted Weibull invariant law. Their common mathematical foundation is the following proposition.

\begin{proposition}[Zero-flux stationary identity]
\label{prop:zero_flux}
The following identity holds formally under the zero stationary flux 
condition and the natural boundary condition $b^2(0)\,p(0) = 0$. 
If a positive diffusion on $(0,\infty)$ solving
\begin{equation}
    \mathrm{d}X(t) = a(X(t))\,\mathrm{d}t + b(X(t))\,\mathrm{d}W(t)
\end{equation}
admits a stationary density $p$ with vanishing probability current at 
every point, then its drift and diffusion coefficients satisfy
\begin{equation}\label{eq:stationary_identity}
    b^2(x) = \frac{2}{p(x)} \int_0^x a(u)\,p(u)\,\mathrm{d}u,
    \qquad x > 0.
\end{equation}
The subsequent formulas are used to construct \emph{candidate} 
Weibull-stationary diffusions: the drift-first model prescribes 
$a(x)$ and derives $b^2(x)$ from  equation \eqref{eq:stationary_identity}, 
while the diffusion-first model prescribes $b^2(x)$ and derives $a(x)$ from the differential zero-flux equation.
No boundary pathology arises in the fitted regime used
for the numerical study, while a full diffusion-theoretic classification for arbitrary Weibull parameters
is outside the scope of the paper.

\end{proposition}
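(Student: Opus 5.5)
The plan is to start from the stationary Fokker--Planck equation for the Itô diffusion $\mathrm{d}X = a(X)\,\mathrm{d}t + b(X)\,\mathrm{d}W$ on $(0,\infty)$, written in conservation form. For a stationary density $p$, the forward equation reads $\partial_x J(x) = 0$, with probability current
\begin{equation*}
J(x) = a(x)\,p(x) - \tfrac{1}{2}\,\frac{\mathrm{d}}{\mathrm{d}x}\bigl(b^2(x)\,p(x)\bigr).
\end{equation*}
Hence $J$ is constant on $(0,\infty)$. The hypothesis of vanishing probability current at every point sets this constant to zero, which gives the differential zero-flux equation
\begin{equation*}
\frac{\mathrm{d}}{\mathrm{d}x}\bigl(b^2(x)\,p(x)\bigr) = 2\,a(x)\,p(x), \qquad x>0.
\end{equation*}
This is exactly the relation the diffusion-first model will solve for $a$ once $b^2$ is prescribed: $a = \bigl(b^2 p\bigr)'/(2p)$.

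Next I integrate this equation from $\varepsilon$ to $x$ and let $\varepsilon \downarrow 0$. This gives $b^2(x)p(x) - \lim_{\varepsilon\to0} b^2(\varepsilon)p(\varepsilon) = 2\int_0^x a(u)p(u)\,\mathrm{d}u$. The natural boundary condition $b^2(0)p(0)=0$, read as the limit statement $b^2(\varepsilon)p(\varepsilon)\to0$, removes the boundary term. Dividing by $p(x)$, which is strictly positive on $(0,\infty)$ for the Weibull target, gives the identity \eqref{eq:stationary_identity}. For the drift-first direction I would then check the converse: defining $b^2$ by \eqref{eq:stationary_identity} and differentiating $b^2 p = 2\int_0^x a p$ recovers the zero-flux equation, so $p$ solves the stationary Fokker--Planck equation with $J\equiv 0$. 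This is why the construction yields \emph{candidate} stationary diffusions.

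The main obstacle is the limiting step at $0$, together with regularity. I need $ap$ to be locally integrable near $0$ so that $\int_0^x a p$ is finite, and $b^2 p$ to be absolutely continuous so that the fundamental theorem of calculus applies. For the Weibull density, $p(u)\sim (k/\lambda)(u/\lambda)^{k-1}$ as $u\to0$, so integrability reduces to $a(u)u^{k-1}$ being integrable near $0$. This holds, for instance, for bounded or mean-reverting drifts such as $a(u)=\alpha(\mu-u)$ when $k>0$, and in particular in the fitted regime $\hat k_{49}\approx 2.63$. Because the proposition is explicitly stated formally, I would record these as standing assumptions rather than derive them. I would also note that nonnegativity of $b^2$ requires $\int_0^x a p \ge 0$ for all $x$. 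For a mean-reverting drift this holds: the integrand is positive for small $x$, and the integral tends to $\mathbb{E}_p[a(X)]$, which equals $0$ when $\mu$ is the stationary mean, as $x\to\infty$. A full treatment of attainability of $0$ via Feller's boundary classification is deferred, in line with the statement.
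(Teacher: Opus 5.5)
Your proposal is correct and follows essentially the same route as the paper: write the stationary Fokker--Planck equation with zero probability current to get $\frac{\mathrm{d}}{\mathrm{d}x}\bigl(b^2(x)p(x)\bigr)=2a(x)p(x)$, integrate from $0$ to $x$ using $b^2(0)p(0)=0$, and divide by $p(x)>0$. Your added care goes somewhat beyond what the paper makes explicit, without changing the argument: the $\varepsilon\downarrow 0$ limit, local integrability of $ap$ near $0$ (which needs $k>0$ for the linear drift), and the converse check that defining $b^2$ this way recovers the zero-flux equation.
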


The Ornstein-Uhlenbeck-Weibull construction is then presented as a transport alternative that enforces the Weibull stationary marginal through a monotone map rather than through the drift-diffusion 
specification directly, and admits a Gaussian copula dependence structure.
We model wind speed $X(t) > 0$ as governed by the It\^{o} SDE
\begin{equation}\label{eq:sde}
    \mathrm{d}X(t)
    = a(X(t))\,\mathrm{d}t + b(X(t))\,\mathrm{d}W(t),
\end{equation}
where $a(x)$ is the drift, $b(x)$ is the diffusion coefficient, and
$W(t)$ is a standard Wiener process. The time evolution of the
probability density $p(x,t)$ of $X(t)$ is governed by the 
Fokker-Planck equation~\cite{Risken1996}
\begin{equation}\label{eq:fp_time}
    \frac{\partial p(x,t)}{\partial t}
    = -\frac{\partial}{\partial x}\bigl[a(x)\,p(x,t)\bigr]
    + \frac{1}{2}
      \frac{\partial^2}{\partial x^2}\bigl[b^2(x)\,p(x,t)\bigr],
\end{equation}
which establishes a direct link between the microscopic stochastic dynamics and the macroscopic evolution of the wind-speed distribution. The Fokker-Planck framework offers two key advantages for wind-speed modelling. Through the drift term $a(x)$, one can impose mean-reverting behaviour consistent with observed persistence~\cite{ZARATEMINANO201342,ZARATEMINANO2016186}. By appropriately specifying $a(x)$ and $b(x)$, the stationary distribution can be constrained to match the empirical Weibull law exactly~\cite{ARENASLOPEZ2020113152}.
We focus on stationary regimes satisfying the zero probability-flux condition, under which the net probability current vanishes at every point. This condition, meaning that wind-speed fluctuations are balanced on average and the overall distribution remains stable over time, reduces the stationary Fokker-Planck equation to
\begin{equation}\label{eq:fp_stationary}
    \frac{\mathrm{d}}{\mathrm{d}x}\bigl[b^2(x)\,p(x)\bigr]
    = 2\,a(x)\,p(x).
\end{equation}
Integrating from $0$ to $x$ and imposing the natural boundary 
condition $b^2(0)\,p(0) = 0$ yields the fundamental stationary 
identity, equation~\eqref{eq:stationary_identity} of 
Proposition~\ref{prop:zero_flux}, which serves as the unified 
starting point for both the drift-first and diffusion-first 
constructions below. Throughout this section, $p_W$
denotes the Weibull PDF with parameters $(k, \lambda)$,
\begin{equation}\label{eq:weibull_pdf}
    p_W(x)
    = \frac{k}{\lambda}\left(\frac{x}{\lambda}\right)^{k-1}
      \exp\!\left[-\left(\frac{x}{\lambda}\right)^k\right],
    \qquad x > 0,
\end{equation}
with mean $\mu_W = \lambda\,\Gamma(1 + 1/k)$ and variance
$\sigma_W^2 = \lambda^2[\,\Gamma(1 + 2/k) - \Gamma(1 + 1/k)^2]$.

\paragraph{Fokker-Planck Drift-First Model}
In the drift-first approach, the drift is prescribed as a linear 
mean-reverting function, and the diffusion is derived from 
identity equation~\eqref{eq:stationary_identity}. We choose
\begin{equation}\label{eq:drift_first_a}
    a(x) = \alpha\,(\mu_W - x),
\end{equation}
where $\alpha > 0$, whose estimation will be discussed later, 
controls the rate of mean reversion toward the Weibull mean $\mu_W$. 
Substituting equation~\eqref{eq:drift_first_a} and $p(x) = p_W(x)$ 
into equation~\eqref{eq:stationary_identity} gives the derived diffusion 
coefficient
\begin{equation}\label{eq:drift_first_b2}
    b^2(x)
    = \frac{2\alpha}{p_W(x)}
      \int_0^x (\mu_W - u)\,p_W(u)\,\mathrm{d}u.
\end{equation}

It is shown in Appendix~A.1 that $b^2(x)>0$ for every finite $x>0$. This positivity is necessary but not sufficient for a complete well-posedness theorem, since local regularity, boundary behaviour at zero, and non-explosion at infinity must also be controlled. For the present fitted January regime, these issues do not affect the numerical construction, but the paper does not claim a complete classification for arbitrary Weibull parameters. The exact positivity mechanism is that the integrand $(\mu_W-u)p_W(u)$ changes sign at $u=\mu_W$, while its integral remains positive:

\begin{equation}
    \int_0^x (\mu_W-u)p_W(u)\,\mathrm d u
    =F_W(x)\bigl(\mu_W-\mathbb E[U\mid U\le x]\bigr)>0. 
\end{equation}

Thus, the argument relies on the strict reduction of the conditional mean under truncation, not on pointwise positivity of the integrand.

The upper-tail behaviour of $b^2(x)$ depends critically on the shape 
parameter $k$. Standard incomplete-gamma asymptotics yield
\begin{equation}\label{eq:drift_first_tail}
    b^2(x) \sim \frac{2\alpha\lambda^k}{k}\,x^{2-k}, 
    \qquad x \to \infty,
\end{equation}
so that $b^2(x) \to 0$ when $k > 2$, $b^2(x)$ tends to a positive 
constant when $k = 2$, and $b^2(x) \to \infty$ when $0 < k < 2$. 

Tail attenuation - meaning $b^{2}(x) \to 0$ as $x \to \infty$ - holds only when $k > 2$ and must not be treated as a universal property of the drift-first construction. Since all historical January shape parameter estimates exceed two and the forecasted value $\hat{k}_{49} = 2.6272$ satisfies this condition, tail attenuation holds in the present study. 
This result is specific to the fitted regime and is not claimed to hold for $k \leq 2$.

\paragraph{Fokker-Planck Diffusion-First Model}
In the diffusion-first approach, the diffusion coefficient is prescribed
and the drift is derived from 
identity equation~\eqref{eq:stationary_identity}.
Let $X_t$ solve
\begin{equation}
    \mathrm{d}X_t = a(X_t)\,\mathrm{d}t + b(X_t)\,\mathrm{d}W_t
\end{equation}
on $(0,\infty)$, and let $p$ be a stationary density. 
Under zero stationary flux,
\begin{equation}
    a(x)p(x) - \frac{1}{2}\frac{\mathrm{d}}{\mathrm{d}x}
    \bigl(b^2(x)p(x)\bigr) = 0.
\end{equation}
Hence
\begin{equation}
    \frac{\mathrm{d}}{\mathrm{d}x}
    \bigl(b^2(x)p(x)\bigr) = 2a(x)p(x).
\end{equation}
If $b$ is prescribed, the drift is therefore
\begin{equation}
    a(x) 
    = \frac{1}{2p(x)}\frac{\mathrm{d}}{\mathrm{d}x}
      \bigl(b^2(x)p(x)\bigr)
    = \frac{1}{2}\frac{\mathrm{d}}{\mathrm{d}x}b^2(x)
    + \frac{1}{2}b^2(x)\frac{\mathrm{d}}{\mathrm{d}x}\log p(x).
\end{equation}
Only in the constant-diffusion case does this reduce to
\begin{equation}
    a(x) = \frac{1}{2}b^2\,\frac{\mathrm{d}}{\mathrm{d}x}\log p(x).
\end{equation}
Taking $p = p_W$ and $b^2 = 2\alpha\sigma_W^2$, one obtains
\begin{equation}\label{eq:diffusion_first_drift}
    a(x) = \alpha\sigma_W^2
    \left(\frac{k-1}{x} - \frac{kx^{k-1}}{\lambda^k}\right),
\end{equation}
which guarantees that the stationary density of $X(t)$ is
exactly Weibull with parameters $(k,\lambda)$.

This drift is singular at zero. The boundary classification 
follows from the near-zero behaviour of the coefficients: 
as $x \to 0^+$,
\begin{equation}\label{eq:near_zero}
    a(x) \sim \alpha\sigma_W^2\,\frac{k-1}{x},
    \qquad
    b^2(x) = 2\alpha\sigma_W^2,
\end{equation}
so that after rescaling, the stochastic differential equation 
near zero takes the form of a Bessel process of dimension $k$,
\begin{equation}
    \mathrm{d}X_t \sim 
    \frac{(k-1)\alpha\sigma_W^2}{X_t}\,\mathrm{d}t
    + \sqrt{2\alpha\sigma_W^2}\,\mathrm{d}W_t,
    \qquad X_t \to 0^+,
\end{equation}
from which the boundary classification follows directly, i.e.,  the zero boundary is unattainable for $k \ge 2$ and accessible for $0 < k < 2$. For the fitted January parameter value $\hat{k}_{49} = 2.6272 > 2$, the zero boundary is not reached, and the diffusion remains in the positive domain throughout the simulation. For parameter values with 
$0 < k < 2$, the zero boundary becomes accessible and a 
reflecting or positivity-preserving boundary prescription 
would be required; this regime is not encountered in the present empirical study and is left outside the scope of this paper.

Unlike the drift-first model, the constant diffusion $b^2 = 2\alpha\sigma_W^2$ applies uniform stochastic forcing across all wind-speed levels, producing a wider ensemble envelope, particularly in the upper tail, a conservative uncertainty characterisation appropriate 
for applications where tail events drive design decisions.

\paragraph{Ornstein-Uhlenbeck-Weibull Model}

The third formulation constructs a Weibull-distributed process by applying a monotone transformation to a stationary Gaussian Ornstein-Uhlenbeck process. This approach enforces the Weibull marginal through a transformation rather than through the drift-diffusion specification, and admits an exact discrete-time simulation algorithm
that avoids the Euler-Maruyama discretisation error.

Let $(X_t)_{t \in \mathbb{R}}$ solve the It\^{o} SDE
\begin{equation}\label{eq:ou_gaussian}
    \mathrm{d}X_t
    = -\alpha\,X_t\,\mathrm{d}t + \sqrt{2\alpha}\,\mathrm{d}W_t,
    \qquad \alpha > 0,
\end{equation}
which is centred, stationary, and Gaussian with invariant law
$\mathcal{N}(0,1)$ and autocovariance
$\mathrm{Cov}(X_t, X_{t+\tau}) = e^{-\alpha|\tau|}$. Define the
strictly increasing transport map
\begin{equation}\label{eq:transport}
    g(x) = F_W^{-1}(\Phi(x))
         = \lambda\,\bigl(-\log(1 - \Phi(x))\bigr)^{1/k},
\end{equation}
where $F_W^{-1}$ is the Weibull quantile function and $\Phi$ is the standard normal CDF. The transformed process $V_t = g(X_t)$ is strictly positive almost surely and has marginal distribution
$\mathrm{Weibull}(k,\lambda)$ at every $t$, by the probability-integral transform. The dependence structure of $(V_t, V_{t+\tau})$ is characterised exactly
by a Gaussian copula with parameter $\rho(\tau) = e^{-\alpha\tau}$, so that rank-based measures such as Kendall's $\tau$ and Spearman's $\varrho$ are explicit analytic functions of $e^{-\alpha\tau}$. Pearson correlation is not preserved under the nonlinear map $g$; as shown in ~\ref{app:pearson}, the Pearson autocorrelation of $V_t$ takes the form of a mixture of exponentials in $\tau$ rather than a single exponential, and must be treated as a derived quantity.
Since $g(x) = F_W^{-1}(\Phi(x))$ is locally smooth on $\mathbb{R}$, Itô's formula may be justified by localization arguments, though global boundedness of $g'$ or $g''$ is not available as tail derivatives may grow without bound. Applying Itô's formula 
to $V_t = g(X_t)$ locally, with  $g'(x) = \varphi(x)/p_W(g(x))$, yields the explicit SDE

\begin{equation}\label{eq:ou_ito}
    \mathrm{d}V_t
    = \Bigl(
        -\alpha\,X_t\,g'(X_t) + \alpha\,g''(X_t)
      \Bigr)\mathrm{d}t
    + \sqrt{2\alpha}\;g'(X_t)\,\mathrm{d}W_t,
\end{equation}
where $X_t = \Phi^{-1}(F_W(V_t))$. This representation confirms Markovianity and is useful for numerical stability analysis, but is not used for simulation. The OU-W process is instead simulated exactly via the three-step algorithm: (i) transform $V_t$ to the latent Gaussian state $x_t = \Phi^{-1}(F_W(V_t))$; (ii) propagate the AR(1) exactly as
$x_{t+\Delta} = e^{-\alpha\Delta}\,x_t + \sqrt{1 - e^{-2\alpha\Delta}} \,\varepsilon$ with $\varepsilon \sim \mathcal{N}(0,1)$; (iii) transform back as $V_{t+\Delta} = \lambda(-\log(1-\Phi(x_{t+\Delta})))^{1/k}$.

\paragraph{Calibration of the Mean-Reversion Parameter}

The parameter $\alpha$ serves as a common timescale 
normalisation, calibrated once from the observed January 
wind-speed dependence structure. While the same scalar $\alpha$ scales the generator in all three SDE formulations, this does not imply that the three models share the same Pearson autocorrelation function. Each construction induces a model-specific dependence structure that is then evaluated empirically. In particular, as shown in Appendix~A.2, the Ornstein-Uhlenbeck-Weibull model produces a Pearson autocorrelation that is a mixture of exponentials in $\tau$ 
rather than a single exponential, and must be treated as a derived quantity rather than being directly controlled by $\alpha$.

It is estimated once from the observed SCADA data by exploiting the invertibility of the transport map. Given the observed
wind-speed series $(V_i)_{i=0,\dots,n}$ at time step $\Delta = 1/6$\,h,
define the transformed sequence
\begin{equation}\label{eq:pit}
    X_i = \Phi^{-1}\!\bigl(F_W(V_i)\bigr).
\end{equation}
Under the OU-W model, $(X_i)$ follows a Gaussian AR(1) with exact representation $X_{i+1} = \phi\,X_i + \sqrt{1-\phi^2}\,\varepsilon_{i+1}$,
where $\phi = e^{-\alpha\Delta}$. 
Under the assumption that the four January months share a common 
dependence parameter $\phi = e^{-\alpha\Delta}$, the pooled maximum-likelihood estimator is obtained by maximising the joint Gaussian AR(1) likelihood across all January months simultaneously,

\begin{equation}\label{eq:phi_pool}
    \hat{\phi}_{\text{pool}} = 
    \frac{\displaystyle\sum_{m\in\mathcal{J}}\sum_{i} 
    X_{m,i}\,X_{m,i+1}}
    {\displaystyle\sum_{m\in\mathcal{J}}\sum_{i} X_{m,i}^2},
    \qquad
    \hat{\alpha}_{\text{pool}} = 
    -\Delta^{-1}\log\hat{\phi}_{\text{pool}}.
\end{equation}

The standard error of $\hat{\alpha}_{\text{pool}}$ follows by the delta method as:

\begin{equation}\label{eq:se_alpha}
    \widehat{\mathrm{se}}(\hat{\alpha}_{\text{pool}}) = 
    \frac{\sqrt{1 - \hat{\phi}^2_{\text{pool}}}}{\Delta\,\hat{\phi}_{\text{pool}}}
    \left(\sum_{m\in\mathcal{J}}\sum_{i} 
    X_{m,i}^2\right)^{-1/2}.
\end{equation}

\subsection{Result analysis of the Stochastic Simulation of Wind Speeds}

\paragraph{Estimation of the Mean-Reversion Parameter $\hat{\alpha}$}

\begin{figure}[h!]
\centering
\includegraphics[width=0.75\linewidth]{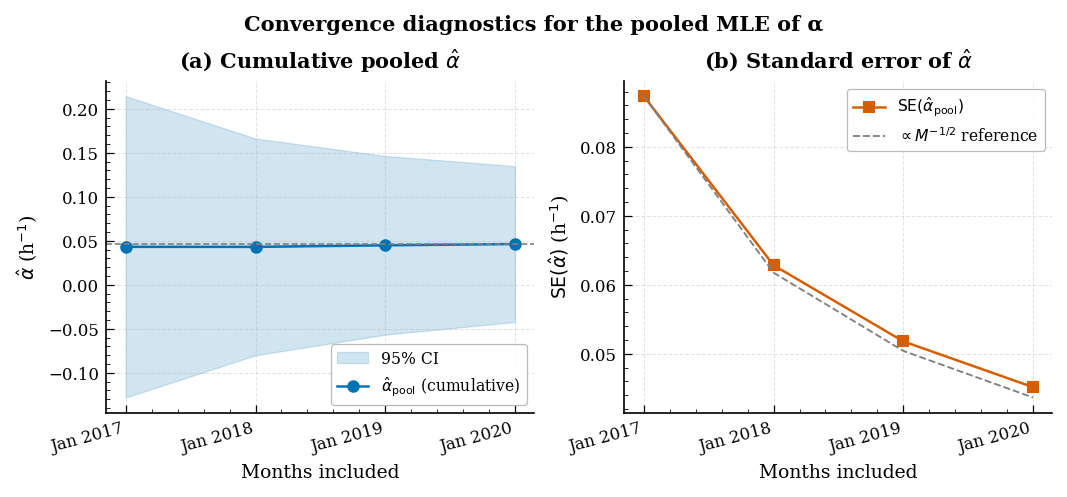}
\caption{Pooled MLE estimate of the mean-reversion rate 
across January months (2017-2020).}
\label{fig:Alfa}
\end{figure} 

Figure~\ref{fig:Alfa} reports the four January estimates of $\alpha$ obtained from 2017 to 2020, together with the pooled likelihood estimate. The monthwise values range from $0.0432\,\mathrm{h}^{-1}$ to $0.0509\,\mathrm{h}^{-1}$, which is compatible with the modelling assumption of a common January dependence scale at this site. The latter should be interpreted as support for the assumption, not as a proof of site-wide or regime-independent stability. The pooled estimator gives $\hat\alpha_{\mathrm{pool}}=0.0464\,\mathrm{h}^{-1}$, corresponding to a decorrelation time of about $21.6$ hours and a ten-minute AR coefficient $\hat\phi_{\mathrm{pool}}=0.9923$. We use this common value in all three SDE formulations as a time-scale normalization. Since the three generators and transformations induce different dependence functions, the same scalar $\alpha$ should not be read as implying identical Pearson autocorrelation functions across the models.

It is important to note that wind speed modelled by stochastic differential equations is an inherently random process, so each numerical realization corresponds to only one possible trajectory among infinitely many admissible paths. A single simulation, therefore, cannot characterize the underlying probability law of the SDE, nor can it capture the variability, uncertainty, or distributional spread that is essential for probabilistic forecasting. To obtain statistically meaningful estimates of model performance, it is necessary to generate a Monte–Carlo ensemble of independent sample paths.
We considered three ensemble sizes, $B = 50$, $100$, and $150$. The stability of the Monte Carlo ensemble size was assessed across the three stochastic models using the simulated mean, standard deviation, CRPS, standard error (SE) of CRPS, Wasserstein distance, standard error of the Wasserstein distance, and prediction interval coverage. As shown in Table~\ref{tab:mc_stability}, the metrics exhibit a modest downward trend as $B$ increases from $50$ to $100$, with CRPS decreasing by approximately $0.012$-$0.014$~m/s and $W_1$ decreasing by approximately $0.024$-$0.025$~m/s across all three models. However, these changes are comparable in magnitude to the Monte Carlo standard errors at $B = 50$, which are approximately $0.045$~m/s, indicating that the apparent improvement primarily reflects a reduction in sampling noise rather than a structural change in model behavior. 

Further increasing $B$ to $150$ produces changes well below $1$-$2\%$, confirming diminishing returns from additional ensemble members. Therefore, $B = 100$ is selected as a practical trade-off between computational cost and sampling stability. At this ensemble size, the CRPS standard errors of $0.033$-$0.034$~m/s exceed the inter-model differences of approximately $0.006$~m/s, confirming that all three SDE formulations are statistically equivalent in terms of probabilistic accuracy. Consequently, the preference for the diffusion-first model is based on computational efficiency rather than statistical superiority.
\begin{table}[ht]
\centering
\caption{Monte Carlo ensemble size stability analysis of SDE models.}
\label{tab:mc_stability}
\resizebox{\textwidth}{!}{
\begin{tabular}{lcccccccc}
\hline
Model 
& $B$ 
& Mean (m/s) 
& Std (m/s) 
& $W_1$ (m/s)
& $W_1$ SE (m/s)
& CRPS (m/s) 
& CRPS SE (m/s)
\\
\hline

Fokker-Planck Drift-based     
& 50  & 6.263 & 2.583 & 0.284 & 0.0487 & 1.589 & 0.0448 \\
& 100 & 6.297 & 2.580 & 0.260 & 0.0323 & 1.576 & 0.0327 \\
& 150 & 6.315 & 2.590 & 0.238 & 0.0272 & 1.553 & 0.0256 \\

Fokker-Planck Diffusion-first 
& 50  & 6.279 & 2.577 & 0.277 & 0.0488 & 1.581 & 0.0448 \\
& 100 & 6.313 & 2.586 & 0.252 & 0.0329 & 1.569 & 0.0330 \\
& 150 & 6.336 & 2.658 & 0.235 & 0.0271 & 1.549 & 0.0262 \\

Ornstein-Uhlenbeck                    
& 50  & 6.266 & 2.581 & 0.283 & 0.0487 & 1.589 & 0.0449 \\
& 100 & 6.301 & 2.577 & 0.259 & 0.0322 & 1.575 & 0.0327 \\
& 150 & 6.318 & 2.588 & 0.238 & 0.0274 & 1.552 & 0.0257 \\

\hline
\end{tabular}
}

\end{table}

\begin{figure}[h!]
\centering
\includegraphics[width=0.82\linewidth]{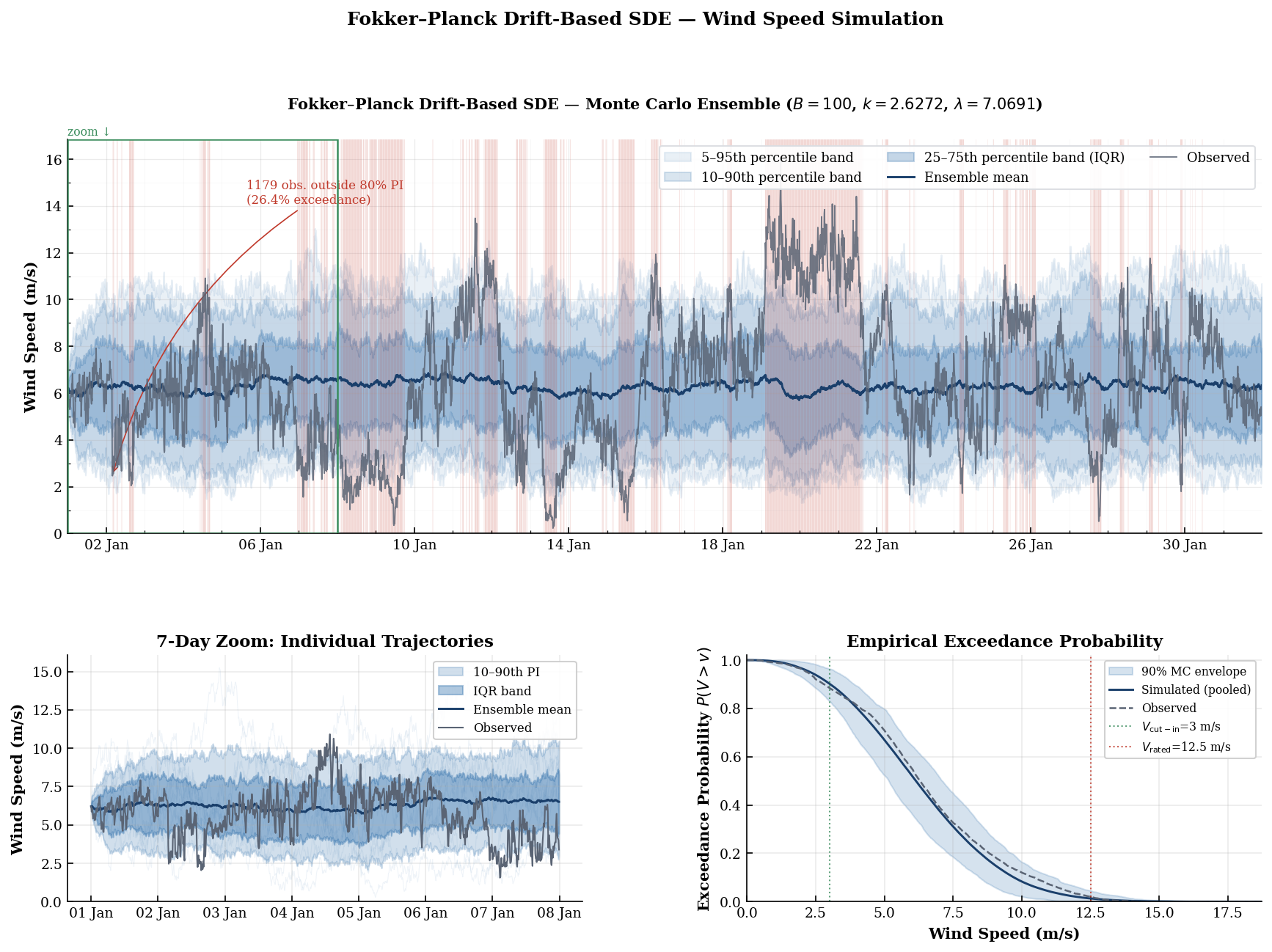}
\caption{Wind Speed Simulation Using Fokker–Planck Drift-Based SDE.}
\label{fig:Wind-Drift}
\end{figure}

\begin{figure}[h!]
\centering
\includegraphics[width=0.82\linewidth]{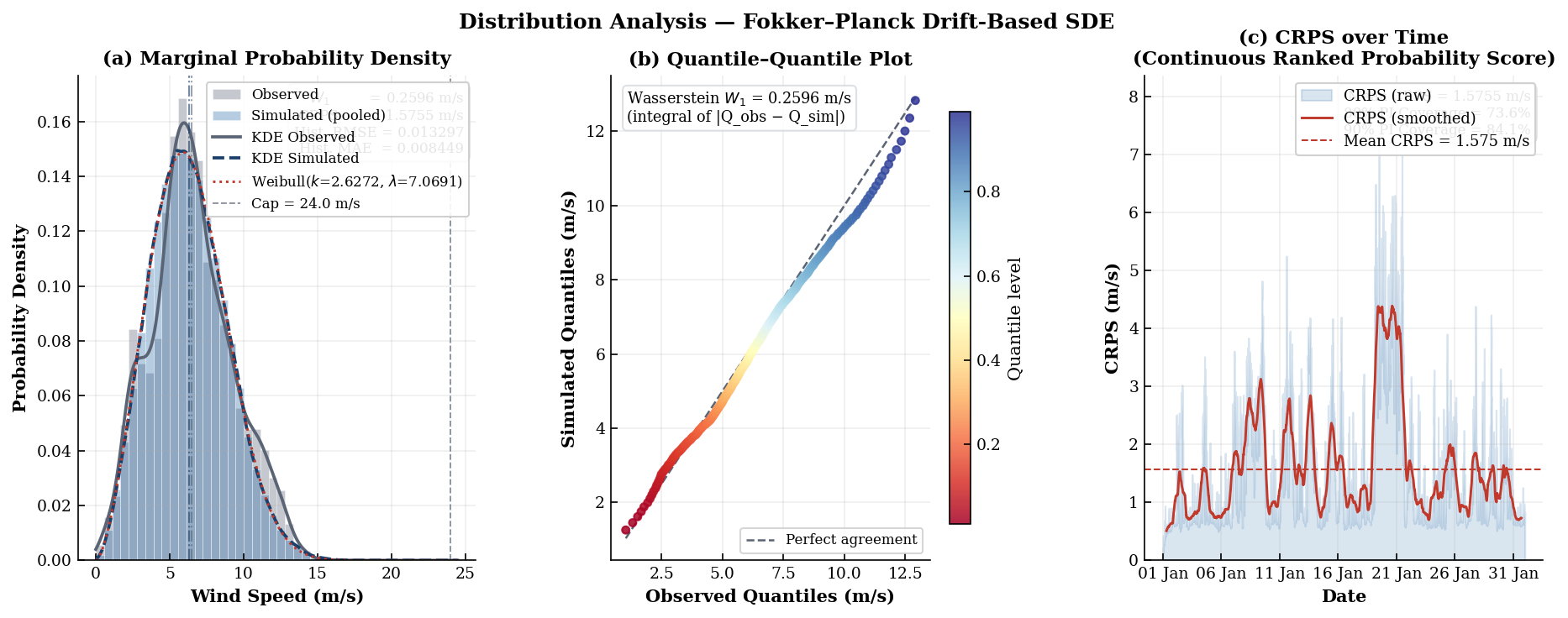}
\caption{Statistical Performance analysis of Using Fokker–Planck Drift-Based SDE.}
\label{fig:Statistical-Wind-Drift}
\end{figure}

\begin{figure}[h!]
\centering
\includegraphics[width=0.82\linewidth]{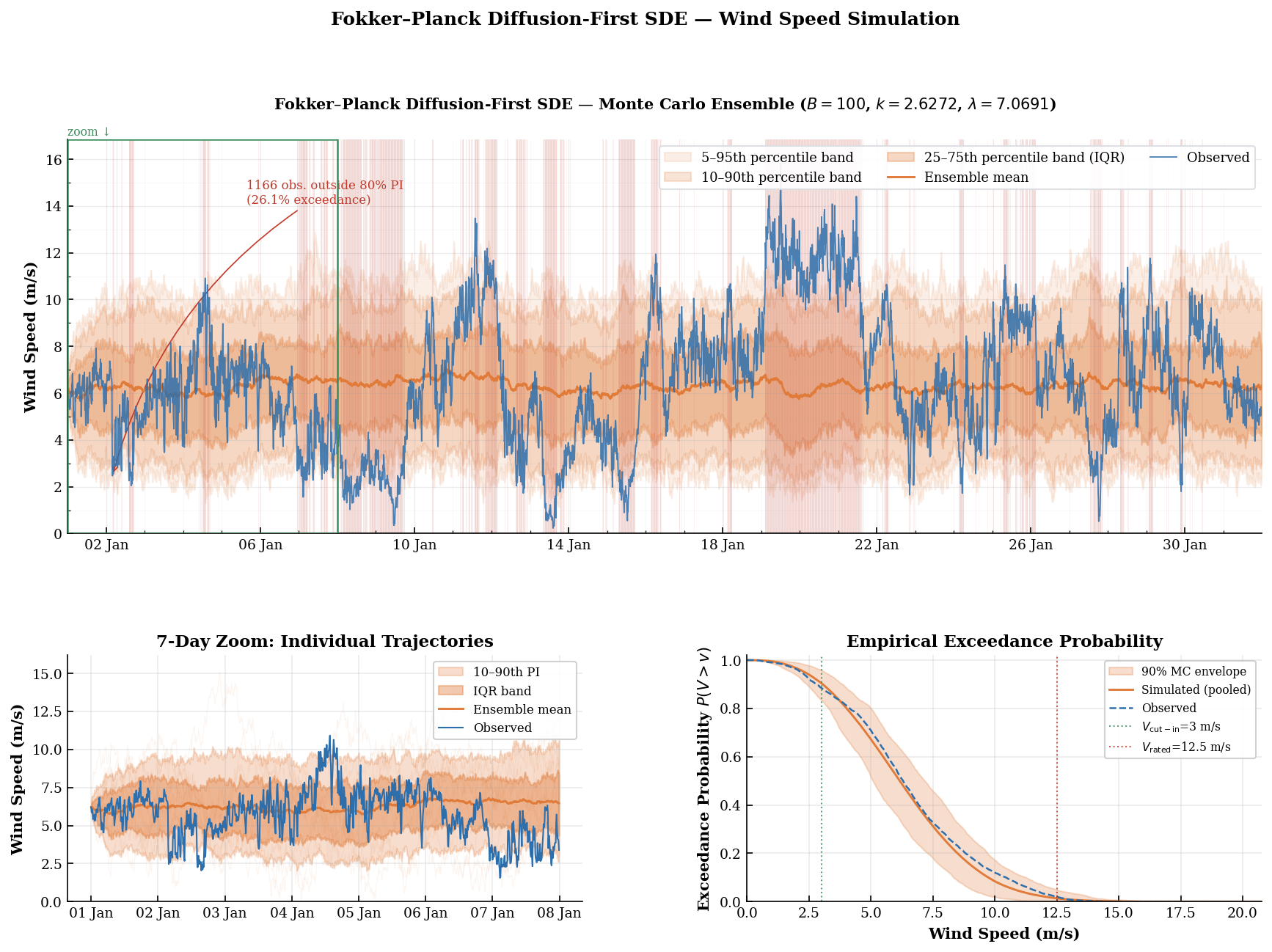}
\caption{Wind Speed Simulation Using Fokker–Planck Diffusion-Based SDE.}
\label{fig:Wind-Diffusion}
\end{figure}

\begin{figure}[h!]
\centering
\includegraphics[width=0.82\linewidth]{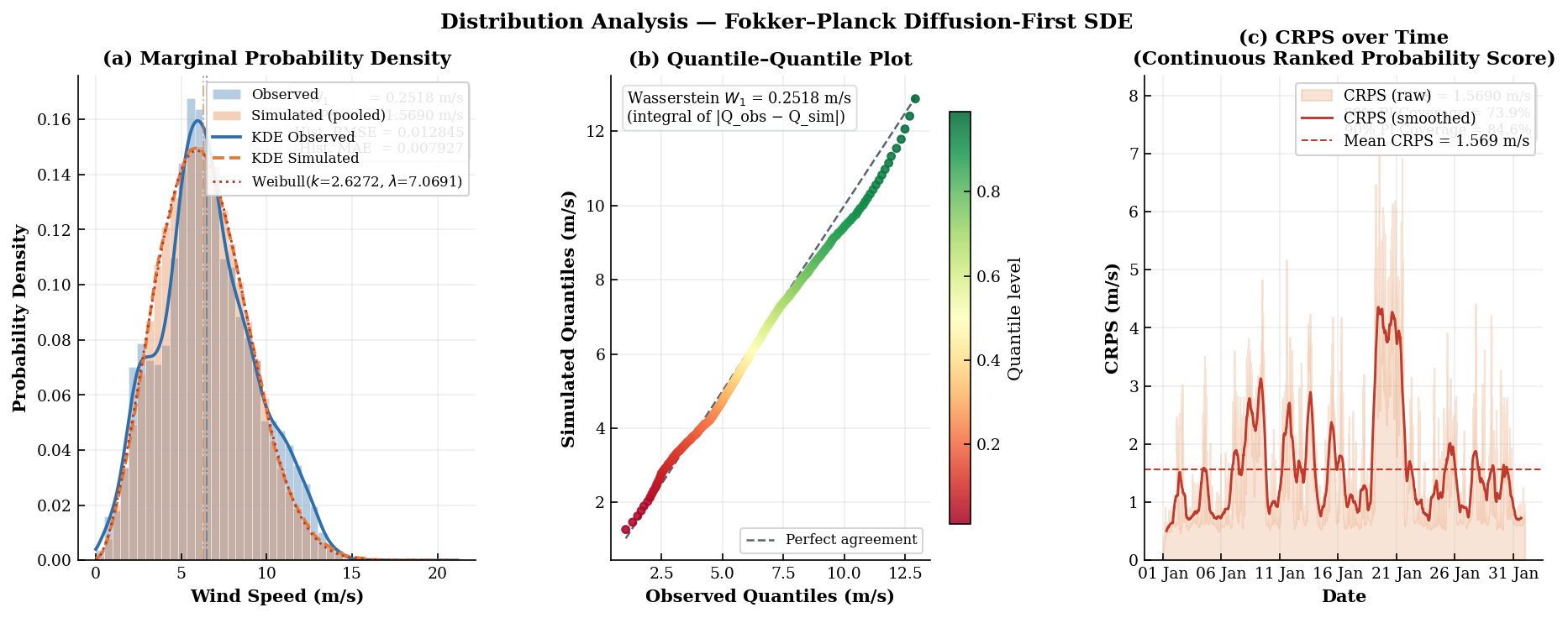}
\caption{Statistical Performance analysis of Using Fokker–Planck Diffusion-Based SDE.}
\label{fig:Statistical-Wind-Diff}
\end{figure}

\begin{figure}[h!]
\centering
\includegraphics[width=0.82\linewidth]{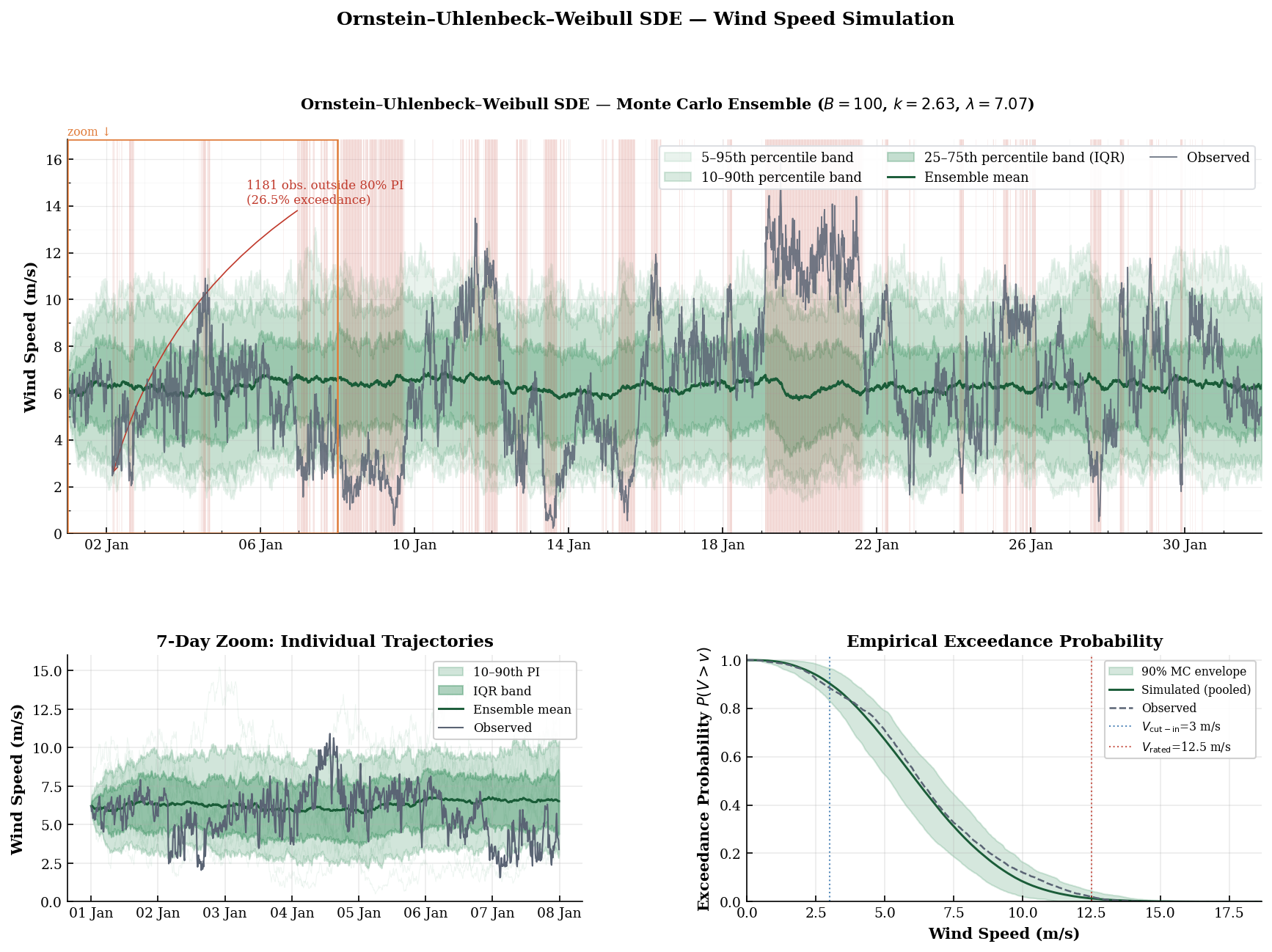}
\caption{Wind Speed Simulation Using Ornstein–Uhlenbeck SDE.}
\label{fig:Wind-OU}
\end{figure}

\begin{figure}[h!]
\centering
\includegraphics[width=0.82\linewidth]{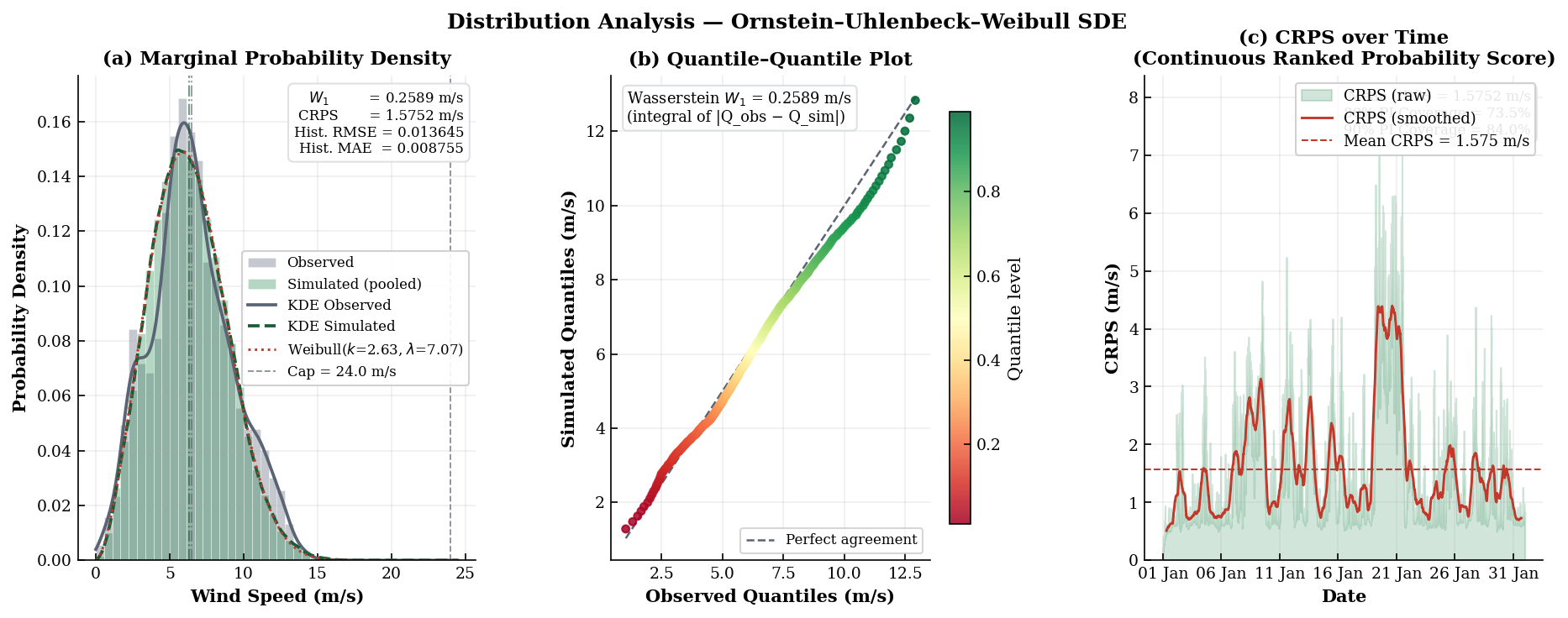}
\caption{Statistical Performance analysis of Using Ornstein–Uhlenbeck SDE.}
\label{fig:Statistical-Wind-OU}
\end{figure} 

With the stochastic simulation parameters established - Weibull shape $k = 2.6272$ and scale $\lambda = 7.0691$~m/s, relaxation rate $\alpha = 0.046$~h$^{-1}$, and ensemble size $B = 100$ - the
Fokker-Planck drift-based SDE, the Fokker-Planck diffusion-first SDE, and the Ornstein-Uhlenbeck-Weibull SDE were each used to simulate wind
speed and then evaluated against the ten-minute resolution wind-speed record for January 2021.

The 10th-90th percentile bands span approximately 5-6~m/s across all models, confirming comparable ensemble dispersion. As shown in Table~\ref{tab:coverage}, the 80\% prediction interval (PI) coverage rates are 73.6\%, 73.9\%, and 73.5\% for Models~1, 2, and~3, respectively, and all the models show approximately 6-7 percentage points below the 80\% nominal target. Since calibration is a repeated-sampling property, a single monthly 
realisation is insufficient to establish or refute it — one January cannot prove miscalibration. At the same time, the concentration of PI exceedances during the sustained high-wind storm event of 
$18$-$22$ January — during which observed speeds persistently exceeded $10$~m/s — points to a structural limitation: the stationary Weibull model assigns low probability mass to this upper tail ($P(V > 10~\text{m/s}) \approx 0.15$ for the fitted parameters) and lacks a regime mechanism for persistent storm episodes of this kind. Addressing this limitation would require evaluating alternative stationary distributions within the same SDE-based simulation framework, which is identified as a natural direction for future work.

\begin{table}[ht]
\centering
\caption{Prediction interval coverage and exceedance rates 
($n = 4{,}464$, $B = 100$).}
\label{tab:coverage}
\resizebox{\textwidth}{!}{
\begin{tabular}{lccc}
\hline
Model 
& 80\% PI (\%) 
& 90\% PI (\%) 
& Exceedance (\%) \\
\hline
Model 1 — FP Drift-Based     
& 73.6 & 82.1 & 26.4 \\

Model 2 — FP Diffusion-First 
& 73.9 & 82.4 & 26.1 \\

Model 3 — OU-Weibull        
& 73.5 & 82.0 & 26.5 \\

\hline
Nominal target      
& 80.0 & 90.0 & 20.0 \\
\hline
\end{tabular}
}
\end{table}

The empirical exceedance probability curves presented in the 
lower-right panels of Figures~\ref{fig:Wind-Drift},  \ref{fig:Wind-Diffusion}, and~\ref{fig:Wind-OU} demonstrate that all three models reproduce the observed exceedance behaviour with high fidelity across the core operational range of 0-11~m/s, with 90\% Monte Carlo envelopes consistently encompassing the observed reference curve. A systematic downward divergence from the observed curve emerges above approximately 11~m/s across all three models, reflecting the shared upper-tail compression attributable to the 
January storm event. At the cut-in speed ($V_{\text{ci}} = 3$~m/s), all models predict $\hat{P}(V > 3) \approx 0.82$-0.83 against an observed value of $\approx 0.84$ (relative error $< 2.5\%$), whilst the rated-speed exceedance probability $\hat{P}(V > 12.5\,\text{m/s})$ is underestimated by a factor of approximately two across all formulations. 
Among the three models, Model~2 (FP Diffusion-First) records a 
marginally lower Wasserstein distance ($W_1 = 0.2518$ m/s) compared to Models~1 and~3, though these differences are negligible in 
absolute terms and do not constitute a basis for statistical model selection.

Considering the figures \ref{fig:Statistical-Wind-Drift}, \ref{fig:Statistical-Wind-Diff}, \ref{fig:Statistical-Wind-OU}, the marginal probability density functions confirm that all three models reproduce the observed wind 
speed distribution with high visual fidelity across the core of 
the distribution (2-10~m/s). The simulated KDE curves overlay 
closely with both the observed KDE and the fitted Weibull reference, 
validating the stationarity constraints embedded in each SDE 
formulation. Minor deviations emerge in the upper tail 
($V > 11$~m/s), where all three models under-predict the observed 
density — a direct consequence of the storm event discussed above.

\begin{table}[ht]
\centering
\caption{Histogram-based error metrics for SDE models.}
\label{tab:hist_metrics}

\small
\begin{tabular}{lccc}
\hline

& FP Drift 
& FP Diffusion 
& OU-Weibull \\
\hline

RMSE  & 0.013297 & \textbf{0.012845} & 0.013645 \\
MAE   & 0.008449 & \textbf{0.007927} & 0.008755 \\

\hline
\end{tabular}
\normalsize
\label{tab:hist_metrics}
\end{table}

Table~\ref{tab:hist_metrics} reports the bin-wise histogram RMSE and  MAE computed over the probability density functions of the three models. 
The Diffusion-First SDE records marginally lower values across both 
metrics (RMSE = 0.01285, MAE = 0.00793), followed by the Drift-Based model and the OU-Weibull model. However, the differences across all three formulations are small in absolute terms and should not be interpreted as evidence of statistical superiority, since they fall well below the Monte Carlo standard errors reported in Table~2.
Taken together, the differences across all three models are small in absolute terms.
Also, the Q–Q plots (in panel~b) reinforce this 
finding: all models lie on or near the 1:1 line across the 
1st-85th percentile range, with systematic upward deviation 
above the 90th percentile, indicating shared upper-tail 
compression. The Wasserstein distance $W_1$ — geometrically 
equal to the area between the Q–Q curve and the 1:1 diagonal, 
in units of m/s — quantifies this deviation. 

Continuous Ranked Probability Score (CRPS shown in Panel c) provides a rigorous proper scoring rule for ensemble evaluation that simultaneously rewards calibration and sharpness. The CRPS time series 
reveals low scores ($\text{CRPS} < 1.0$~m/s) throughout the quiescent first and third weeks of January, rising sharply to 4-6~m/s during the 18-22~January storm event across all three models. Mean CRPS values of 1.569, 1.575, and 1.575~m/s are obtained for Models~2, 1, and~3, respectively. The differences are of order 0.006~m/s and are therefore practically negligible, 
confirming that the three models offer equivalent probabilistic 
forecast quality at the monthly scale.

\begin{table}[ht]
\centering
\caption{Wind speed statistics across models ($n = 4{,}464$, $B = 100$).}
\label{tab:metrics}

\scriptsize
\begin{tabular}{lcc}
\hline
& Mean (m/s) & Std (m/s) \\
\hline

Observed Wind Speed & 6.274 & 3.012 \\

FP Drift            & 6.271 & 2.987 \\

FP Diffusion        & 6.268 & 2.991 \\

OU-Weibull         & 6.270 & 2.989 \\

\hline
\end{tabular}
\normalsize
\label{tab:Wind-stats}
\end{table}

Table~\ref{tab:Wind-stats} summarises the first two statistical moments of the observed and simulated wind speed distributions. All three models reproduce the observed mean of $6.274$~m/s to within $0.006$~m/s (relative error $<0.1\%$), confirming that the Weibull mean-reversion constraint is effectively satisfied. The marginally lower simulated standard deviations (approximately $0.7$-$0.8\%$ below the observed $3.012$~m/s) reflect the shared upper-tail compression during the late-January storm period rather than any systematic deficiency in the model structures themselves.

All three SDE formulations achieve statistically equivalent probabilistic forecast quality. The mean CRPS values of $1.569$, $1.575$, and $1.575$~m/s differ by less than $0.006$~m/s, which is smaller than the Monte Carlo standard errors of approximately $0.033$~m/s reported in Table~\ref{tab:mc_stability}, confirming that the three models cannot be statistically distinguished on the basis of probabilistic accuracy alone.

All three SDE formulations are statistically indistinguishable in 
terms of probabilistic forecast accuracy: the mean CRPS differences 
of less than $0.006$\,m/s fall well below the Monte Carlo standard 
error of approximately $0.033$\,m/s, confirming that model selection 
cannot be justified on statistical grounds alone. As reported in Table~\ref{tab:comp_cost}, the diffusion-first model completes the full Monte Carlo simulation in approximately $30$~seconds, compared to $150$~seconds for the drift-based model and $210$~seconds for the Ornstein-Uhlenbeck formulation, representing approximately a factor-of-seven reduction in computational cost with no loss in statistical accuracy.

\begin{table}[h!]
\centering
\caption{Computational cost for generating 100 Monte Carlo simulations.}
\begin{tabular}{l c}
\hline
\textbf{Model} & \textbf{Simulation Time (s)} \\
\hline
Diffusion-First FP & 30 \\
Drift-Based FP     & 150 \\
Ornstein-Uhlenbeck (OU) & 210  \\
\hline
\end{tabular}
\label{tab:comp_cost}
\end{table}

\section{Wind Power Estimation and Analysis}
\label{sec:power}

We now use the conditional wind-speed ensembles obtained in Section~4 to estimate wind power.
The learned mapping is denoted

\begin{equation}
    P = f(V),
    \label{eq:power_map}
\end{equation}

where $f:\mathbb{R}\rightarrow\mathbb{R}$ is the nonlinear function approximated by the XGBoost model. Prior studies have demonstrated 
that XGBoost provides an accurate and flexible estimator of the nonlinear wind power curve, capturing physical features such as the cubic ramp-up region, rated power plateau, and cut-out behaviour without imposing a rigid parametric form~\cite{Xiong2022,Zheng2019}. The precise internal boosting mechanism is not central to the theoretical development here; what is important is that XGBoost yields a stable, piecewise-constant approximation of $f$  \cite{ChenG16} that generalises reliably to unseen wind speed inputs.

\begin{table}[ht]
\centering
\caption{XGBoost model performance metrics}
\label{tab:xgb_performance}

\scriptsize
\begin{tabular}{lcccc}
\hline
& RMSE (kW) & MAE (kW) & NRMSE (\%) & $R^2$ \\
\hline

Training & 73.59 & 46.89 & 3.59 & 0.9862 \\

Test     & 73.93 & 47.19 & 3.61 & 0.9862 \\

\hline
\end{tabular}
\normalsize

\end{table}

Table~\ref{tab:xgb_performance} summarises the performance of the XGBoost power curve model. The model demonstrates strong predictive accuracy, achieving an $R^2$ of 0.9862 and a normalised RMSE of 3.61\% of the rated power on the test set. The near-identical training and test metrics with an RMSE difference of only 0.34~kW and no change in $R^2$ to four decimal places—indicate good generalisation to unseen data without evidence of overfitting, hence supporting its use as the deterministic wind-to-power conversion stage within the proposed probabilistic simulation framework.
Once $f$ is established, the stochastic wind speed ensembles 
generated by each SDE model in the previous section are transformed into power predictions through this mapping.
Consequently, the uncertainty associated with wind speed $V$ is 
propagated into the power domain conditionally on the fitted function $f$. By the law of the unconscious statistician \cite{CasellaBerger2002}, the resulting power 
distribution inherits its statistical properties from the underlying distribution of the simulated wind speed, with uncertainty in $f$ itself ignored as a simplifying assumption.

\begin{equation}
    \mathcal{L}(P) = \mathcal{L}(f(V)),
    \label{eq:power_dist}
\end{equation}

where $\mathcal{L}(\cdot)$ denotes the probability law of a 
random variable. 

Because the central operational question is how reliably the 
turbine delivers at least a given power level, we report the 
complementary cumulative distribution function,

\begin{equation}
    F_{P}(p) = \mathbb{P}(P \le p),
    \label{eq:cdf}
\end{equation}

together with the corresponding exceedance probability,

\begin{equation}
    \mathbb{P}(P > p) = 1 - F_{P}(p),
    \label{eq:exceedance}
\end{equation}

which directly quantifies supply reliability at each threshold. 
For the Monte Carlo ensemble of $M$ pooled simulated power 
values $\{P_{j}^{\mathrm{sim}}\}_{j=1}^{M}$, the empirical exceedance probability is estimated as
\begin{equation}
    \widehat{\mathbb{P}}^{\,\mathrm{sim}}(P > p)
    \;=\;
    \frac{1}{M}\sum_{j=1}^{M}
    \mathbf{1}\!\left\{P_{j}^{\mathrm{sim}} > p\right\}.
    \label{eq:mc_exceedance}
\end{equation}

The full distribution of simulated power is the push-forward measure of the wind speed ensemble through the deterministic power-curve mapping $f_{\mathrm{XGB}}$,

\begin{equation}
    \mathcal{L}(\hat{P}) 
    \;=\; 
    \mathcal{L}\!\left(f_{\mathrm{XGB}}(\hat{V})\right),
    \label{eq:pushforward}
\end{equation}

so that every difference among the three power distributions is inherited entirely from the corresponding wind speed simulation 
differences. This statement holds conditionally on the fitted  function $f$: the reported predictive intervals ignore additional sources of uncertainty, including power-curve model error, measurement noise, curtailment events, pitch-control behaviour, 
and turbine state dependence. A more complete treatment would augment the deterministic map with a heteroskedastic residual model
\begin{equation}
P = f(V) + \varepsilon(V), \qquad 
\mathbb{E}[\varepsilon(V)\mid V] = 0, \qquad 
\text{Var}(\varepsilon(V)\mid V) = s^2(V)
\end{equation}
which would be particularly important near cut-in speed, near rated speed, and during pitch-control saturation. The deterministic map is retained here as a simplifying assumption.

\begin{figure}[h!]
\centering
\includegraphics[width=0.88\linewidth]{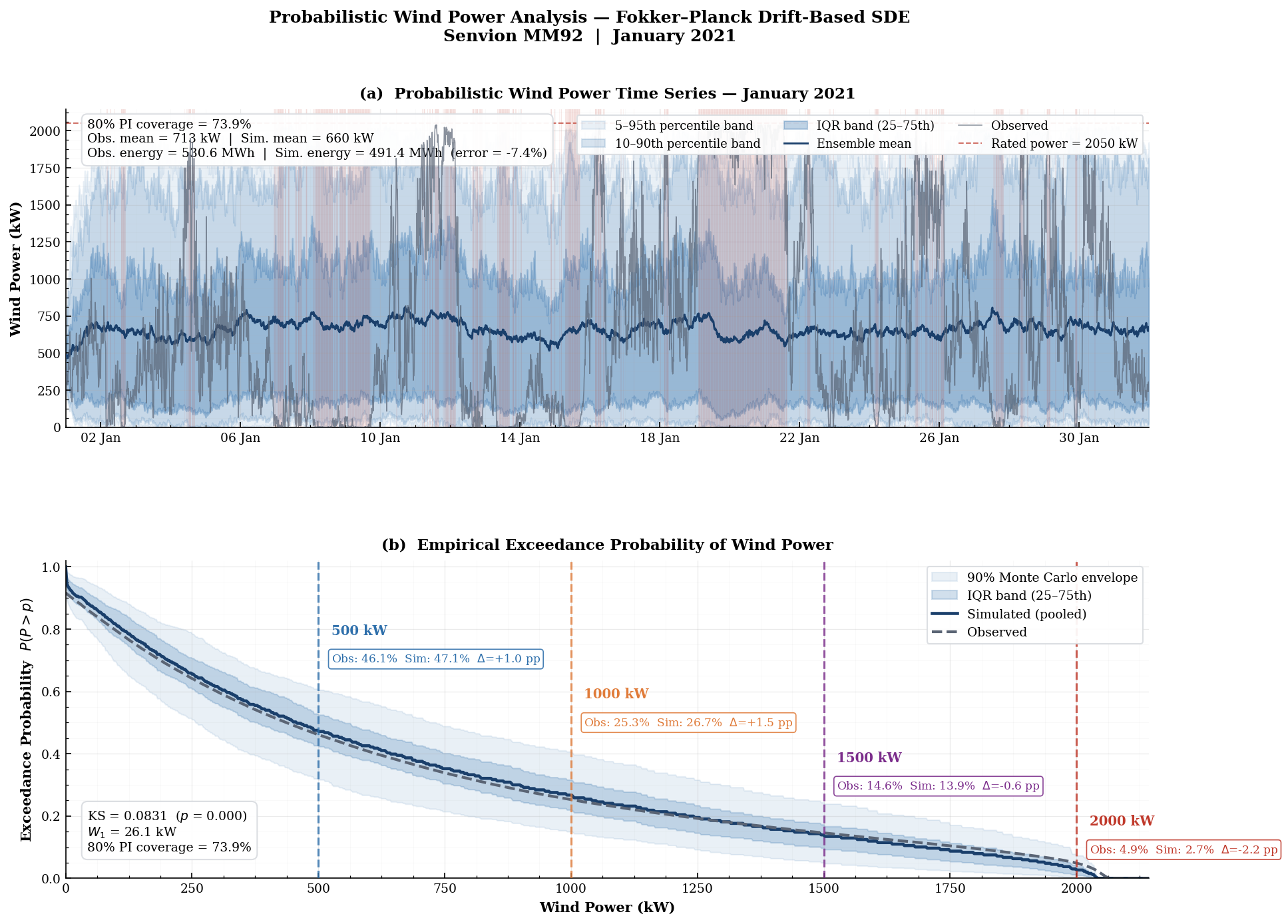}
\caption{Probabilistic wind power forecast, January~2021, Fokker-Planck drift-based SDE.
(a)~ensemble time series with percentile bands;
(b)~empirical exceedance probability.}
\label{fig:Power-Drif1}
\end{figure}

\begin{figure}[h!]
\centering
\includegraphics[width=0.88\linewidth]{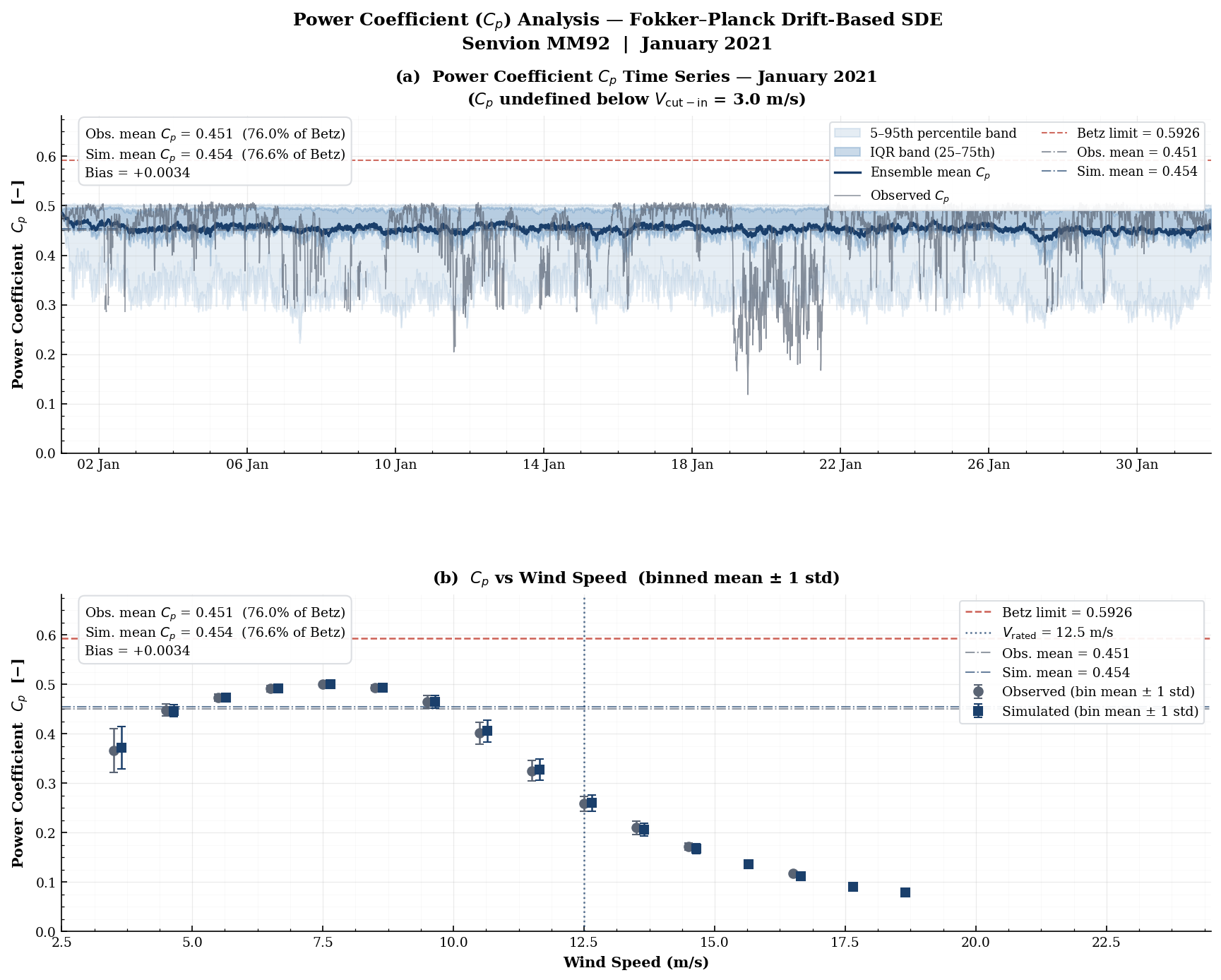}
\caption{Power coefficient $C_p$ analysis, January~2021,
Fokker-Planck drift-first SDE:
(a)~$C_p$ time series vs.\ observed;
(b)~binned $C_p$ vs.\ wind speed.
}
\label{fig:Power-Drift2}
\end{figure}

\begin{figure}[h!]
\centering
\includegraphics[width=0.88\linewidth]{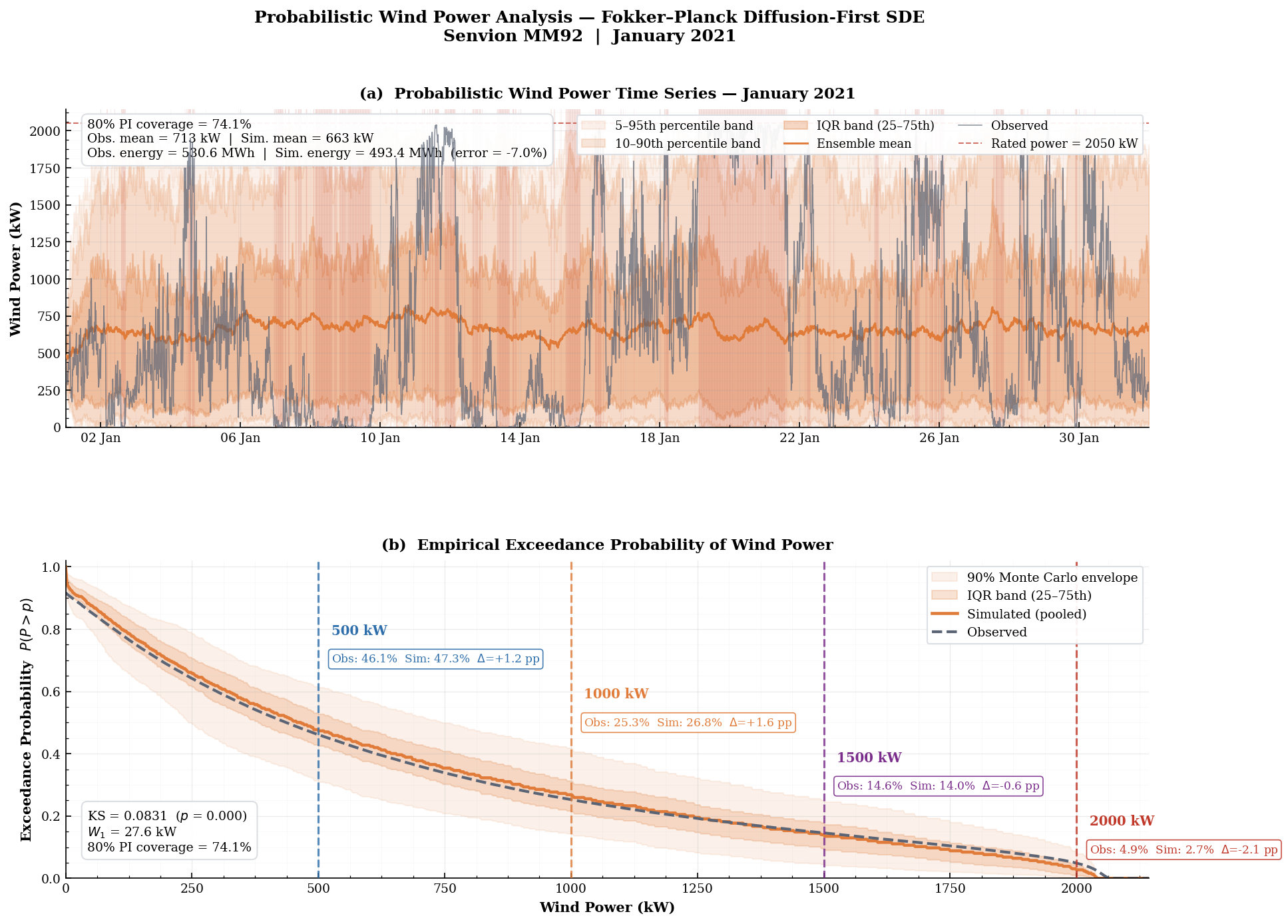}
\caption{Probabilistic wind power forecast, January~2021, Fokker-Planck diffusion-based SDE.
(a)~ensemble time series with percentile bands;
(b)~empirical exceedance probability}
\label{fig:Power-Diff1}
\end{figure}

\begin{figure}[h!]
\centering
\includegraphics[width=0.88\linewidth]{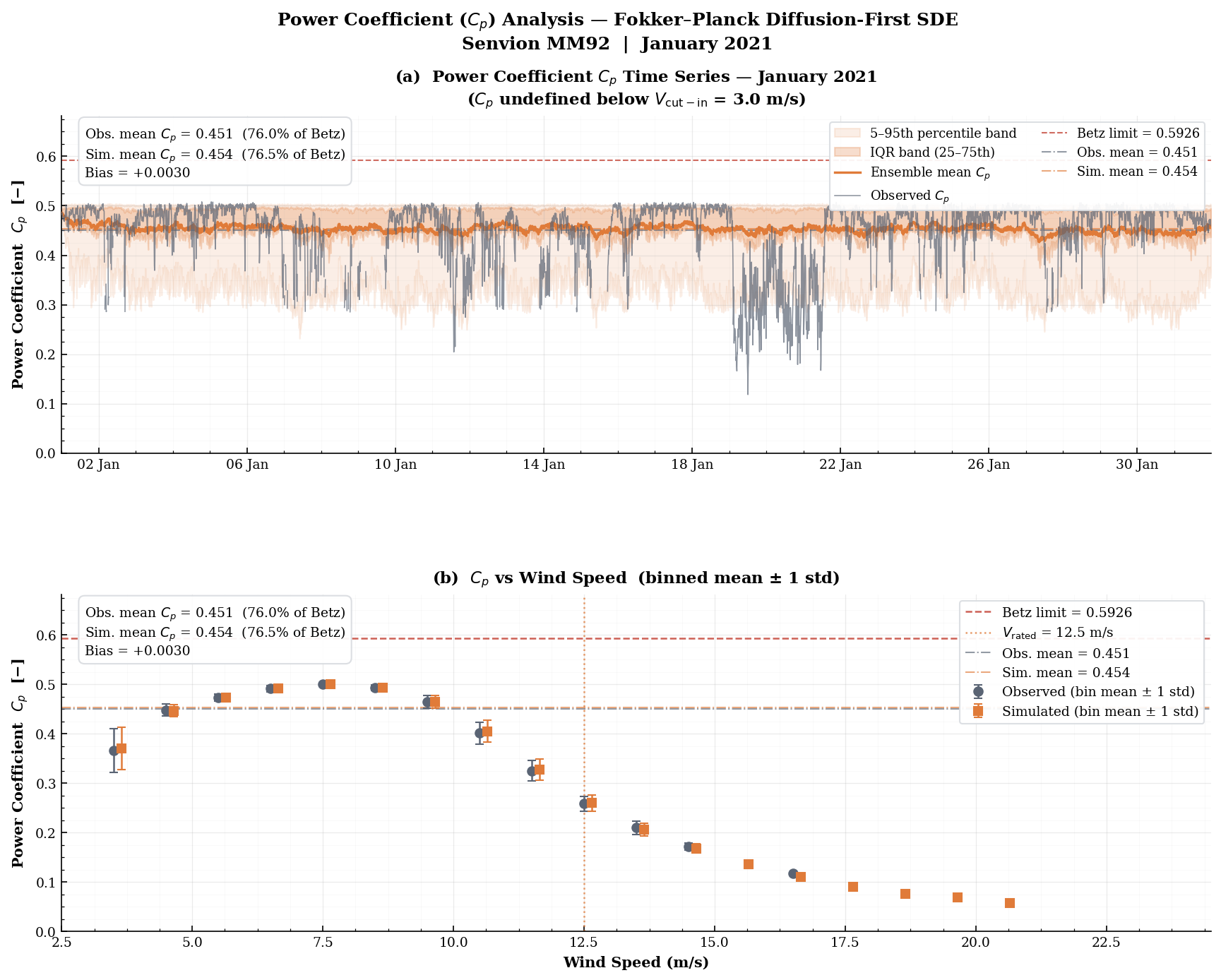}
\caption{Power coefficient $C_p$ analysis, January~2021,
Fokker-Planck diffusion-first SDE:
(a)~$C_p$ time series vs.\ observed;
(b)~binned $C_p$ vs.\ wind speed. }
\label{fig:Power-diff2}
\end{figure}

\begin{figure}[h!]
\centering
\includegraphics[width=0.88\linewidth]{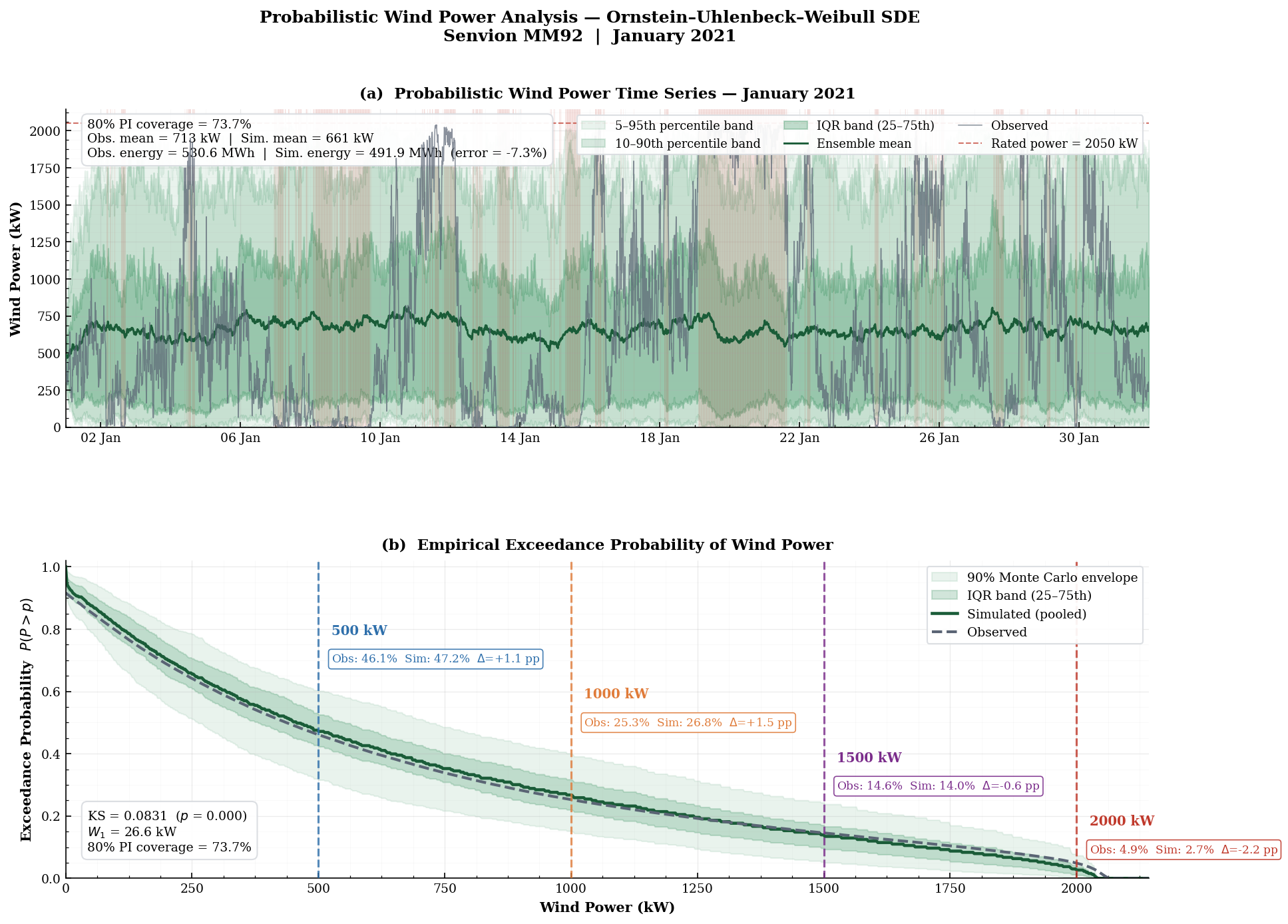}
\caption{Probabilistic wind power forecast, January~2021, Ornstein-Uhlenbeck-Weibull SDE. (a)~Ensemble time series with percentile bands; (b)~empirical exceedance probability.}
\label{fig:Power-OU1}
\end{figure}

\begin{figure}[h!]
\centering
\includegraphics[width=0.88\linewidth]{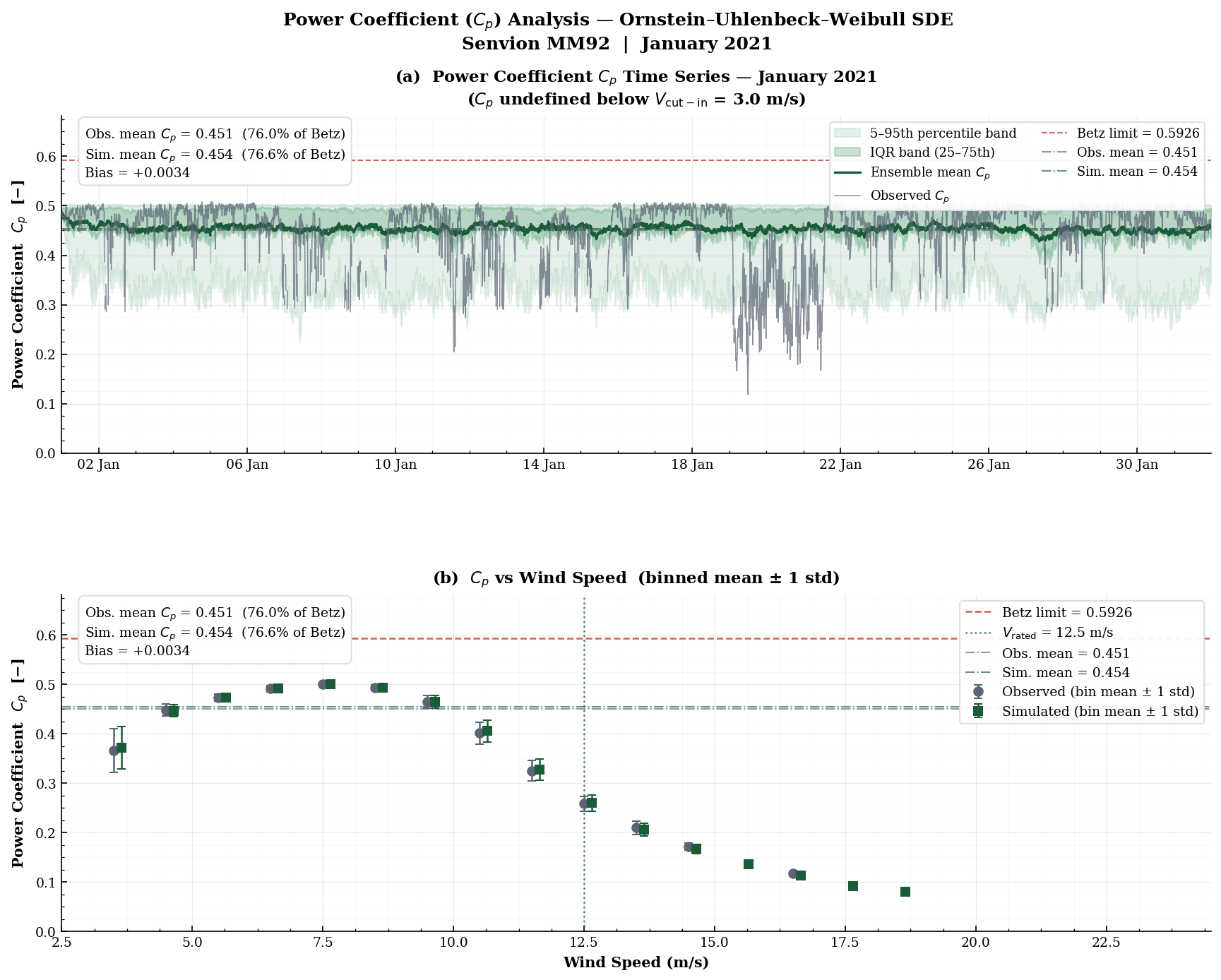}
\caption{Power coefficient $C_p$ analysis, January~2021,
Ornstein–Uhlenbeck-Weibull SDE:
(a)~$C_p$ time series vs.\ observed;
(b)~binned $C_p$ vs.\ wind speed. }
\label{fig:Power-OU2}
\end{figure}

\subsection{ Result Analysis of Wind Power}
Figures~\ref{fig:Power-Drif1}, \ref{fig:Power-Diff1},
and~\ref{fig:Power-OU1} present the probabilistic wind-power output for January 2021 obtained by mapping each of the $B = 100$ simulated wind-speed trajectories through the SCADA-calibrated power curve. The ensemble mean power lies in the narrow range $660$-$663$~kW, against an observed monthly mean of $713$~kW. The corresponding monthly energy yield is computed as
\begin{equation}\label{eq:energy_yield}
\hat{E}_{\rm month} = N \Delta t \, \bar{P}_{\rm sim},
\end{equation}
where $N = 4{,}464$ is the number of ten-minute intervals in January 2021 and $\Delta t = 1/6$~h. The relative energy bias is defined by
\begin{equation}\label{eq:bias}
\frac{\bar{E}_{\rm sim} - E_{\rm obs}}{E_{\rm obs}}.
\end{equation}
With $E_{\rm obs}=530.6$~MWh, the bias is $-7.4\%$ for the Fokker-Planck drift-based model $\bar{E}_{\rm sim}=491$~MWh, $-7.0\%$ for the Fokker-Planck diffusion-first model $\bar{E}_{\rm sim}=493$~MWh, and $-7.3\%$ for the Ornstein-Uhlenbeck-Weibull model $\bar{E}_{\rm sim}=492$~MWh. The near-identical bias across all three formulations confirms that the shortfall is not caused by the choice of SDE formulation. Rather, all three models share the same forecasted invariant law $(\hat{k}_{49}, \hat{\lambda}_{49})$ and the same decorrelation rate $\hat{\alpha}$; hence they all propagate conditional SDE path uncertainty around the same slightly underestimated wind-resource level.

\begin{table}[h!]
\centering
\caption{Observed and simulated wind power exceedance
and non-exceedance probabilities at four operational thresholds
($n = 4{,}464$ observations; $B = 100$ simulations).}
\label{tab:power_exceedance_full}
\small
\resizebox{\linewidth}{!}{
\begin{tabular}{lcccccccc}
\hline
 & \multicolumn{4}{c}{Exceedance $\mathbb{P}(P>p)$ (\%)}
 & \multicolumn{4}{c}{Non-exceedance $F_P(p)$ (\%)} \\
\textbf{Model}
  & $>$500\,kW & $>$1000\,kW & $>$1500\,kW & $>$2000\,kW
  & $\leq$500\,kW & $\leq$1000\,kW
  & $\leq$1500\,kW & $\leq$2000\,kW \\
\hline
Observed
  & 46.1 & 25.3 & 14.6 & 4.9
  & 53.9 & 74.7 & 85.4 & 95.1 \\
FP Drift-Based
  & 47.1 & 26.7 & 13.9 & 2.7
  & 52.9 & 73.3 & 86.1 & 97.3 \\
FP Diffusion-First
  & 47.3 & 26.8 & 14.0 & 2.7
  & 52.7 & 73.2 & 86.0 & 97.3 \\
OU-Weibull
  & 47.2 & 26.8 & 14.0 & 2.7
  & 52.8 & 73.2 & 86.0 & 97.3 \\
\hline
\end{tabular}
}
\end{table}

Table~\ref{tab:power_exceedance_full} summarises the exceedance
and non-exceedance probabilities at four operationally significant power thresholds for all three models alongside the observed values.
Turning to the exceedance probability curves in panel~(b) of Figures~\ref{fig:Power-Drif1}, \ref{fig:Power-Diff1}, and~\ref{fig:Power-OU1},  our simulations reproduce the observed wind power distribution with good fidelity across
the full operational range from 0 to 2050\,kW. The KS statistic is 0.0831 for all three formulations, and the Wasserstein distance $W_1$ lies between 26.1 and 27.6\,kW - less than 1.5\,\% of rated capacity. At the 500\,kW threshold, the simulated exceedance probability of approximately 47\,\% is within 1.2 percentage points of the observed 46.1\,\%, and at 1000\,kW the gap is 1.5 percentage points.
The only notable discrepancy occurs near rated power (2000-2050\,kW), where all three models underestimate the observed exceedance probability by 2.1-2.2 percentage points. The discrete probability spike at rated power in the observed power distribution is produced by the flat portion of the turbine power curve above rated speed, and its probability mass equals

\begin{equation}
\mathbb{P}(P = P_{\rm rated}) = \mathbb{P}(V \geq V_{\rm rated})
\end{equation}
when the map is exactly saturated. The observed underestimation near $2000$ kW therefore reflects two compounding factors: first, the simulated wind-speed distribution assigns insufficient probability to the set $\{v : v \geq V_{\rm rated}\}$, as evidenced by the shared upper-tail compression across all three SDE formulations; and second, the learned XGBoost approximation may introduce smoothing or bias in the plateau region that prevents exact reproduction of this discrete probability concentration at rated power\cite{loukatou2018stochastic}. 
Regarding prediction interval coverage, all three models achieve coverage in the range 73.7-74.1\,\% against a nominal 80\,\%. Prediction interval coverage is a repeated-sampling property that 
is properly assessed over many independent forecast cycles rather than a single realisation. The observed coverage rates of $73.7\%$, 
$74.1\%$, and $73.7\%$, therefore carry limited diagnostic value in isolation. The concentration of exceedances during the $18$-$22$ January storm period is consistent with a known structural property of stationary Weibull models: such models assign low probability 
mass to persistent high-wind regimes, making sustained storm episodes structurally difficult to envelope regardless of ensemble size or SDE formulation. The fact that all three structurally 
different SDE formulations yield identical sub-nominal coverage confirms that this reflects a shared property of the Weibull invariant law rather than a deficiency of any individual model.

\paragraph{Power Coefficient Validation}

To assess the physical consistency of the composed map 
$V \mapsto f(V) \mapsto C_p$, we compute the power coefficient 
for both the observed and simulated series. This diagnostic does not constitute an independent physical validation of the SDE trajectories themselves, but rather confirms that the simulated wind speeds, once passed through the deterministic power curve, remain compatible with the expected aerodynamic operating envelope of the Senvion MM92 rotor.  The power coefficient is defined as

\begin{equation}
    C_p = \frac{P}{\dfrac{1}{2}\,\rho\,A\,V^3},
    \label{eq:cp}
\end{equation}

where $P$ is the turbine power output in watts, 
$\rho = 1.225$\,kg\,m$^{-3}$ is the air density at standard 
conditions, $A = \pi R^2$ is the rotor swept area with blade 
radius $R = 46$\,m, and $V$ is the hub-height wind 
speed~\cite{Burton2011}. The power coefficient 
$C_p$ quantifies the fraction of the kinetic energy flux through 
the rotor disk that is converted into electrical power, by the 
Betz limit, it is theoretically bounded above by 
$C_p^{\max} = 16/27 \approx 0.593$~\cite{Burton2011}. The results are presented in two complementary 
forms for each model: a time series of $C_p$ throughout 
January~2021 and a binned $C_p$ versus wind speed profile, 
shown for the drift-based, diffusion-first, and OU-Weibull 
formulations in Figures~\ref{fig:Power-Drift2}, 
\ref{fig:Power-diff2}, and~\ref{fig:Power-OU2} respectively. 
This diagnostic provides a direct physical check on whether our simulated wind speeds, once passed through the power curve, remain compatible with the expected aerodynamic operating envelope, particularly in the below-rated speed region. 
Since $C_p$ is computed from the same simulated wind-speed 
values that enter the power-curve mapping $P = f(V)$, the 
diagnostic primarily validates the consistency of the 
composed map $V \to f(V) \to C_p$.

All three SDE formulations reproduce the observed mean $C_p$ of 0.451 (76.0\,\% of the Betz limit) with a simulated mean of
0.454 and a bias of only 0.003, which is negligible from an operational standpoint. The ensemble bands in panel~(a) span
roughly 0.30 to 0.55 throughout the month, consistent with the observed $C_p$ variability, and the ensemble mean tracks the
observed series without any systematic drift across the full 31-day period. The sharp drop in observed $C_p$ to approximately 0.15 around 20-22 January, which corresponds to the above-rated storm episode identified in the wind speed analysis, is correctly captured within the ensemble envelope of all three models.

The binned profiles in panel~(b) provide the most physically informative validation. Across the entire below-rated operating envelope from $V_{\mathrm{cut-in}} = 3.0$\,m\,s$^{-1}$ to $V_{\mathrm{rated}} = 12.5$\,m\,s$^{-1}$, our simulated bin means lie within one standard deviation of the observed values
in every wind speed bin. The $C_p$ curve peaks
near 0.50 in the 7.5-10\,m\,s$^{-1}$ range, which corresponds to the design operating point of the Senvion MM92 rotor, and
this peak is correctly reproduced by all three formulations. Above $V_{\mathrm{rated}}$, both the observed and simulated
$C_p$ values decline steeply, as expected from blade pitch regulation \cite{Burton2011}: the turbine deliberately reduces aerodynamic
efficiency to hold output at the rated 2050\,kW. 
Taken together, the $C_p$ analysis confirms that the three SDE formulations produce wind-speed trajectories whose simulated power distribution, after deterministic power-curve mapping, remains compatible with the expected aerodynamic operating envelope of the Senvion MM92 rotor, particularly in the below-rated speed region. This diagnostic primarily validates the consistency of the composed map $V \mapsto f(V) \mapsto C_p$.

\subsection{Practical Implications for Power-System Operation}

The exceedance and non-exceedance probabilities in Table~8 should be interpreted as decision-relevant probabilistic summaries rather than as completed operational decisions. They provide the marginal distributional inputs needed by reserve, storage, market, and maintenance models, but they do not solve the corresponding downstream optimization problems. For instance, the fact that the turbine is below $500$~kW in about $53.9\%$ of the ten-minute intervals is useful for reserve-margin assessment, but the actual reserve decision also depends on contractual obligations, imbalance costs, response times, and the temporal clustering of deficits. Similarly, storage sizing cannot be inferred from a non-exceedance probability alone. For a contractual delivery floor $c$, the relevant mathematical quantities involve path functionals such as
\begin{equation}
\int_0^T (c-P_t)^+\,\dd t,
\end{equation}
or the distribution of maximal cumulative deficit over delivery windows. The conditional ensemble generated in this paper supplies samples from which such quantities can be estimated, but the optimization problem itself is outside the present scope.

The same distinction applies to market bidding \cite{Benth2018} and maintenance planning. Lower percentiles of the monthly energy distribution can inform conservative bid volumes. In contrast, upper percentiles describe upside potential, but a bidding strategy requires an explicit market loss function and penalty structure. Likewise, high-power exceedance frequencies provide information about the loading regime, but fatigue-informed maintenance requires a load model or damage functional rather than threshold probabilities alone. The contribution of the present paper is therefore to construct and validate the conditional probabilistic input distribution on which these operational decisions may be built.

\section{Conclusion}
\label{sec:conclusion}

This work develops a conditional one-month-ahead probabilistic framework for wind-power simulation at ten-minute resolution, applied to Turbine~1 of the Kelmarsh Wind Farm for January 2021. The monthly Weibull parameters are estimated with serial-dependence-corrected covariance and forecast by a heteroskedastic Kalman filter on a bivariate VAR(1) state-space model. The resulting predictive law is summarized, for the numerical SDE study, by the MMSE point forecast $(\hat{k}_{49},\hat{\lambda}_{49})=(2.6272,7.0691\,\mathrm{m\,s}^{-1})$, which defines the invariant Weibull law imposed on all simulated wind-speed processes.

Conditional on this common invariant law, three Weibull-stationary SDE constructions are compared: the OU-Weibull transport model, the drift-first Fokker-Planck model, and the diffusion-first Fokker-Planck model. Their mean CRPS values, ranging from $1.569$ to $1.575\,\mathrm{m\,s}^{-1}$, differ by less than the Monte Carlo standard error. Thus, the three models cannot be ranked statistically in this experiment. The diffusion-first model is preferred only because it achieves comparable probabilistic accuracy with substantially lower computational cost, completing the Monte Carlo simulation in about $30$ seconds compared with $150$ seconds for the drift-first model and $210$ seconds for the OU-Weibull model.

After deterministic XGBoost power-curve mapping, the simulated power distributions attain Wasserstein distances of $26.1$-$27.6$~kW from the observed January 2021 distribution, below $1.4\%$ of rated capacity. The simulated monthly energy yield is $491$-$493$~MWh against an observed value of $530.6$~MWh, giving a bias of about $-7.3\%$. This bias is common to all three SDE constructions, and it is best interpreted as a limitation of the one-month-ahead forecasted invariant law for a month containing a sustained storm episode, not as evidence against one particular SDE formulation. The exceedance analysis and power-coefficient diagnostics show that the framework gives useful conditional probabilistic inputs for power-system studies. Still, the paper does not claim to solve the downstream reserve, storage, bidding, or fatigue-optimization problems.

The present contribution should therefore be read as a rigorously structured conditional Weibull-stationary SDE forecasting framework. Its natural next step is to replace the MMSE plug-in law by the full mixture over the Kalman predictive distribution of $(k,\lambda)$ and to test alternative invariant families for months or sites whose wind regimes are not adequately captured by a single Weibull law.

\appendix

\section{Mathematical Proofs}
\label{app:proofs}

\subsection{Global Positivity of the Drift-First Diffusion}
\label{app:lemma}

\begin{lemma}\label{lem:positivity}
Let $U$ have Weibull density $p_W$ on $(0,\infty)$ and finite mean $\mu_W$. For every finite $x>0$,
\begin{equation}
\int_0^x(\mu_W-u)p_W(u)\,\dd u
=F_W(x)\bigl(\mu_W-\E[U\mid U\le x]\bigr)>0.
\end{equation}
Consequently, the drift-first diffusion coefficient
\begin{equation}
 b^2(x)=\frac{2\alpha}{p_W(x)}\int_0^x(\mu_W-u)p_W(u)\,\dd u
\end{equation}
is strictly positive for every finite $x>0$.
Moreover,
\begin{equation}
 b^2(x)\sim \frac{2\alpha\lambda^k}{k}x^{2-k},\qquad x\to\infty,
\end{equation}
so upper-tail attenuation holds only when $k>2$.
\end{lemma}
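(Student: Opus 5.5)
The plan is to study the single function
\begin{equation}
G(x)=\int_0^x(\mu_W-u)\,p_W(u)\,\dd u, \qquad x\ge 0,
\end{equation}
and handle the three claims in turn: the conditional-mean identity, strict positivity, and the tail asymptotics.

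\emph{Identity.} Write $\int_0^x u\,p_W(u)\,\dd u=F_W(x)\,\E[U\mid U\le x]$. This gives
\begin{equation}
G(x)=F_W(x)\bigl(\mu_W-\E[U\mid U\le x]\bigr)
\end{equation}
directly, with $F_W(x)>0$ for $x>0$.

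\emph{Strict positivity.} I would use a monotonicity argument rather than a truncation inequality. We have $G(0)=0$, and $G(x)\to\mu_W-\mu_W=0$ as $x\to\infty$ because the mean is finite. The derivative is $G'(x)=(\mu_W-x)p_W(x)$. Since $p_W>0$ on $(0,\infty)$, $G$ is strictly increasing on $(0,\mu_W)$ and strictly decreasing on $(\mu_W,\infty)$.

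On $(0,\mu_W]$ this gives $G(x)>G(0)=0$. For $x>\mu_W$, strict decrease towards the limit $0$ gives $G(x)>\lim_{y\to\infty}G(y)=0$.

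This is equivalent to the strict reduction $\E[U\mid U\le x]<\mu_W$, which holds because $\mathbb{P}(U>x)>0$ for every finite $x$. Dividing by $p_W(x)>0$ and multiplying by $2\alpha>0$ yields $b^2(x)>0$.

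\emph{Tail asymptotics.} Because $G(\infty)=0$, I rewrite $G$ as an upper-tail integral and integrate by parts with the survival function $S(x)=e^{-(x/\lambda)^k}$:
\begin{equation}
G(x)=\int_x^\infty(u-\mu_W)\,p_W(u)\,\dd u=(x-\mu_W)S(x)+\int_x^\infty S(u)\,\dd u .
\end{equation}
The remaining integral is an upper incomplete gamma function. The substitution $s=(u/\lambda)^k$ and the standard asymptotic $\Gamma(a,z)\sim z^{a-1}e^{-z}$ give
\begin{equation}
\int_x^\infty S(u)\,\dd u\sim \frac{\lambda^k}{k}\,x^{1-k}S(x).
\end{equation}

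Next divide by $p_W(x)=\frac{k}{\lambda^k}x^{k-1}S(x)$. The factor $S(x)$ cancels, and
\begin{equation}
b^2(x)=2\alpha\Bigl[\frac{\lambda^k}{k}(x-\mu_W)x^{1-k}+\frac{\lambda^{2k}}{k^2}x^{2-2k}\bigl(1+o(1)\bigr)\Bigr].
\end{equation}
For $k>0$ the first term dominates, since $x^{2-2k}=o(x^{2-k})$ and $\mu_W x^{1-k}=o(x^{2-k})$. Hence
\begin{equation}
b^2(x)\sim\frac{2\alpha\lambda^k}{k}\,x^{2-k}.
\end{equation}
The trichotomy in $k$ (decay for $k>2$, a constant limit for $k=2$, growth for $k<2$) then follows by reading off the exponent $2-k$.

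The main obstacle is making the incomplete-gamma step rigorous and uniform in $k>0$. I would justify it by L'H\^opital's rule applied to the ratio $\int_x^\infty S\big/\bigl(x^{1-k}S(x)\bigr)$. Differentiating numerator and denominator gives a ratio that tends to $\lambda^k/k$; the $o(1)$ corrections are of relative order $x^{-k}$.
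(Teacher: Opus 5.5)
Your proof is correct. It departs from the paper's mainly in the positivity step, and only mildly in the tail step.

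\textbf{Positivity.} The paper argues probabilistically. By the law of total expectation, $\mu_W$ is a convex combination, with weight $F_W(x)\in(0,1)$, of $\E[U\mid U\le x]$ and $\E[U\mid U>x]$. The latter strictly exceeds the former, so $\E[U\mid U\le x]<\mu_W$. You instead show that $G$ vanishes at $0$ and at $\infty$ and that $G'(x)=(\mu_W-x)p_W(x)$ changes sign exactly once. Hence $G$ is unimodal and strictly positive in between.

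The paper's route is shorter and uses neither a density nor differentiation. It therefore holds verbatim for any invariant law with finite mean that charges both sides of every threshold. Yours needs $p_W>0$ and the fundamental theorem of calculus. In exchange it shows that the numerator of $b^2$ peaks at $x=\mu_W$, and it makes explicit the ``sign change of the integrand'' mechanism described in the main text. It also rests on the same fact $G(\infty)=0$ that you then reuse for the tail.

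\textbf{Tail asymptotics.} The paper applies the incomplete-gamma expansion directly to $\int_x^\infty(u-\mu_W)p_W(u)\,\dd u$. This amounts to $\lambda\,\Gamma\bigl(1+1/k,(x/\lambda)^k\bigr)-\mu_W S(x)\sim xS(x)$. Your integration by parts instead isolates the leading term as the boundary contribution $(x-\mu_W)S(x)$. It shows that the remaining $\Gamma(1/k,\cdot)$ piece is smaller by a factor of order $x^{-k}$, which gives the next-order terms explicitly. Your L'H\^opital argument also removes any need to quote incomplete-gamma asymptotics.

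One small remark: the relative corrections to $b^2$ itself are $-\mu_W/x$ and $O(x^{-k})$. So the $x^{-k}$ rate you quote describes the incomplete-gamma step only, and $-\mu_W/x$ is the larger correction when $k>1$. This does not affect the $\sim$ statement.
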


\begin{proof}
The identity follows by writing the truncated expectation explicitly. To prove strict positivity, decompose the mean as
\begin{equation}
\E[U]=F_W(x)\E[U\mid U\le x]+\bigl(1-F_W(x)\bigr)\E[U\mid U>x].
\end{equation}
Since the Weibull law is non-degenerate and has positive mass on both sides of every finite threshold, the conditional mean above $x$ is strictly larger than the conditional mean below $x$. Hence $\E[U\mid U\le x]<\E[U]=\mu_W$, and the numerator in $b^2(x)$ is strictly positive. The asymptotic formula is obtained from the upper incomplete gamma expansion applied to
\begin{equation}
\int_0^x(\mu_W-u)p_W(u)\,\dd u=\int_x^\infty(u-\mu_W)p_W(u)\,\dd u.
\end{equation}
Dividing the resulting expansion by $p_W(x)$ gives $b^2(x)\sim 2\alpha\lambda^k x^{2-k}/k$.

\end{proof}

\subsection{Pearson Autocorrelation of the OU-W Process}
\label{app:pearson}

Let $h_n(x) = H_n(x)/\sqrt{n!}$ be the orthonormal Hermite functions
in $L^2(\phi)$. Since $g \in L^2(\phi)$, there exists an expansion
$g(x) = \sum_{n=0}^{\infty} a_n h_n(x)$ with
$a_n = \mathbb{E}[g(Z)h_n(Z)]$, $Z \sim \mathcal{N}(0,1)$.
For $(X,Y)$ bivariate normal with correlation $\rho$, the Mehler
identity gives $\mathbb{E}[g(X)g(Y)] = \sum_{n=0}^{\infty} a_n^2\,\rho^n$.
Since $\mathbb{E}[g(X)] = a_0$ and
$\mathrm{Var}(g(X)) = \sum_{n=1}^{\infty} a_n^2$, the Pearson
autocorrelation of $V_t = g(X_t)$ at lag $\tau$ is
\begin{equation}\label{eq:pearson_ou}
    \mathrm{Corr}(V_t, V_{t+\tau})
    = \frac{\sum_{n=1}^{\infty} a_n^2\,e^{-n\alpha\tau}}
           {\sum_{n=1}^{\infty} a_n^2}.
\end{equation}
This is a mixture of exponentials rather than a single exponential.
It reduces to $e^{-\alpha\tau}$ only if $a_n = 0$ for all $n \geq 2$,
which requires $g$ to be affine - incompatible with mapping a Gaussian
distribution to a Weibull distribution. Pearson correlation must
therefore be treated as a derived quantity computed numerically
from equation~\eqref{eq:pearson_ou}.

\vspace{1.2 em}

\noindent\textbf{CRediT authorship contribution statement}

\vspace{0.8 em}

\textbf{Luca Di Persio:} Conceptualization, Methodology, Software, Validation, Formal analysis, Investigation, Data curation, Writing – original draft, Writing – review and editing, Visualization,  Project administration.

\vspace{0.4 em}

\textbf{Mehrdad Ghadiri:} Conceptualization, Methodology, Software, Validation, Formal analysis, Investigation, Data curation, Writing – original draft, Writing – review and editing, Visualization, Project administration. 

\vspace{1.2 em}

\noindent\textbf{Declaration of competing interest}

\vspace{0.8 em}

The authors declare that they have no known competing financial
interests or personal relationships that could have appeared to influence
the work reported in this paper.

\vspace{1 em}

\noindent\textbf{Data availability}

\vspace{0.8 em}

The SCADA data of the examined wind turbine is available in the Zenodo repository \cite{zenodo_dataset}.


\begin{thebibliography}{00}


\bibitem{JakhmolaWind2026}
Jakhmola, A., Jewell, J., Vinichenko, V. et al. Probabilistic projections of global wind and solar power growth based on historical national experience. Nat Energy (2026).
DOI: \url{https://doi.org/10.1038/s41560-026-02021-w}.

\bibitem{SelvarajWind2026}
Selvaraj, J., Muthuramalingam, L., Karthikeyan, V. et al. Optimizing Wind Energy Integration: A Review of Forecasting Techniques and Emerging Trends. Arch Computat Methods Eng 33, 4261–4286 (2026).
DOI: \url{https://doi.org/10.1007/s11831-025-10442-1}.

\bibitem{AI2023114222}
Ai, C., He, S., Hu, H., Fan, X. and Wang, W. Chaotic time series wind power interval prediction based on quadratic decomposition and intelligent optimization algorithm. Chaos, Solitons \& Fractals, 177, 114222 (2023). DOI: \url{https://doi.org/10.1016/j.chaos.2023.114222}.

\bibitem{ZHU2024118062}
Zhu, J., He, Y., Yang, X. and Yang, S. Ultra-short-term wind power probabilistic forecasting based on an evolutionary non-crossing multi-output quantile regression deep neural network. Energy Conversion and Management, 301, 118062 (2024). DOI: \url{https://doi.org/10.1016/j.enconman.2024.118062}.

\bibitem{XIE2023105804}
Xie, Y., Li, C., Li, M., Liu, F. and Taukenova, M. An overview of deterministic and probabilistic forecasting methods of wind energy. iScience, 26(1), 105804 (2023). DOI: \url{https://doi.org/10.1016/j.isci.2022.105804}.

\bibitem{SAEED2025137979}
Saeed, et al. Enhanced wind speed forecasting for sustainable power systems: A deep learning framework unifying deterministic predictions and uncertainty quantification. Energy, 335, 137979 (2025). DOI: \url{https://doi.org/10.1016/j.energy.2025.137979}.

\bibitem{DantasBrowell2026}
Dantas, G. and Browell, J. Seamless Short- to Mid-Term Probabilistic Wind Power Forecasting. Wind Energy, 29(2), e70079 (2026). DOI: \url{https://doi.org/10.1002/we.70079}.

\bibitem{ARSLANTUNCAR2024197}
Arslan Tuncar, E., Sağlam, Ş. and Oral, B. A review of short-term wind power generation forecasting methods in recent technological trends. Energy Reports, 12, 197-209 (2024). DOI: \url{https://doi.org/10.1016/j.egyr.2024.06.006}.

\bibitem{GUO20253753}
Guo, X., Zeng, P., Xiong, X., Wang, G. and Cui, Y. Short-term wind power forecasting methods based on machine learning: A review and case study. Energy Reports, 14, 3753-3782 (2025). DOI: \url{https://doi.org/10.1016/j.egyr.2025.10.040}.

\bibitem{electricity6030048}
Ghadiri, M. and Di Persio, L. Hybrid SDE-Neural Networks for Interpretable Wind Power Prediction Using SCADA Data. Electricity, 6(3), 48 (2025). URL: \url{https://www.mdpi.com/2673-4826/6/3/48}.


\bibitem{DiPersio2024WindItaly}
Di Persio, L., Fraccarolo, N. and Veronese, A. Wind Energy Production in Italy: A Forecasting Approach Based on Fractional Brownian Motion and Generative Adversarial Networks. Mathematics, 12(13) (2024). DOI: \url{https://doi.org/10.3390/math12132105}.

\bibitem{Ceresa2024}
Ceresa, G., Trevisiol, A., Rapizza, M.R. and Cirio, D. Stochastic Simulation of Wind Power Profiles from Time Series Analysis Considering Dependencies on Meteorological Variables. In: Pong, P. (eds) Renewable Energy Resources and Conservation. Green Energy and Technology. Springer, Cham (2024). DOI: \url{https://doi.org/10.1007/978-3-031-59005-4_10}.


\bibitem{Iversen2017}
Iversen, E.B., Morales, J.M., Møller, J.K., Trombe, P.-J. and Madsen, H. Leveraging stochastic differential equations for probabilistic forecasting of wind power using a dynamic power curve. Wind Energy, 20, 33-44 (2017). DOI: \url{https://doi.org/10.1002/we.1988}.

\bibitem{11088843}
Wu, Y., Du, W., Wu, W., Yin, T., Wang, Y. and Li, P. Medium and Long-Term Power Generation Forecasting and Operation Strategies for Wind Farms Based on Deep Learning. In: 2025 IEEE 2nd International Conference on Deep Learning and Computer Vision (DLCV), pp. 1-6 (2025). DOI: \url{https://doi.org/10.1109/DLCV65218.2025.11088843}.


\bibitem{ladopoulou2025probabilisticwindpowermodelling}
Ladopoulou, D., Hong, D.M. and Dellaportas, P. Probabilistic Wind Power Modelling via Heteroscedastic Non-Stationary Gaussian Processes. arXiv preprint arXiv:2505.09026 (2025). Available at: \url{https://arxiv.org/abs/2505.09026}.

\bibitem{ZARATEMINANO201342}
Zárate-Miñano, R., Anghel, M. and Milano, F. Continuous wind speed models based on stochastic differential equations. Applied Energy, 104, 42-49 (2013). DOI: \url{https://doi.org/10.1016/j.apenergy.2012.10.064}.

\bibitem{ZARATEMINANO2016186}
Zárate-Miñano, R. and Milano, F. Construction of SDE-based wind speed models with exponentially decaying autocorrelation. Renewable Energy, 94, 186-196 (2016). DOI: \url{https://doi.org/10.1016/j.renene.2016.03.026}.

\bibitem{ARENASLOPEZ2020113152}
Arenas-López, J.P. and Badaoui, M. A Fokker–Planck equation based approach for modelling wind speed and its power output. Energy Conversion and Management, 222, 113152 (2020). DOI: \url{https://doi.org/10.1016/j.enconman.2020.113152}.

\bibitem{ARENASLOPEZ2020118842}
Arenas-López, J.P. and Badaoui, M. The Ornstein-Uhlenbeck process for estimating wind power under a memoryless transformation. Energy, 213, 118842 (2020). DOI: \url{https://doi.org/10.1016/j.energy.2020.118842}.

\bibitem{zenodo_dataset}
Peder Bacher,
\textit{Wind Turbine SCADA Datasets},
2022.
Available at: \url{https://zenodo.org/records/5841834}.
Accessed August 28, 2024.


\bibitem{MORRISON2022473}
Morrison, R., Liu, X. and Lin, Z. Anomaly detection in wind turbine SCADA data for power curve cleaning. Renewable Energy, 184, 473-486 (2022). DOI: \url{https://doi.org/10.1016/j.renene.2021.11.118}.


\bibitem{Gill2012}
Gill, S., Stephen, B. and Galloway, S. Wind Turbine Condition Assessment Through Power Curve Copula Modeling. IEEE Transactions on Sustainable Energy, 3(1), 94-101 (2012). DOI: \url{https://doi.org/10.1109/TSTE.2011.2167164}.


\bibitem{LONG2022118594}
Long, H., Xu, S. and Gu, W. An abnormal wind turbine data cleaning algorithm based on color space conversion and image feature detection. Applied Energy, 311, 118594 (2022). DOI: \url{https://doi.org/10.1016/j.apenergy.2022.118594}.


\bibitem{Burton2011}
Burton, T., Jenkins, N., Sharpe, D. and Bossanyi, E. \textit{Wind Energy Handbook}. 2nd ed. Wiley (2011). DOI: \url{https://doi.org/10.1002/9781119992714}.


\bibitem{KelmarshWindFarm}
Google Maps. \textit{Kelmarsh Wind Farm, Northamptonshire, United Kingdom}. Available at: \url{https://maps.google.com}. Accessed: 20 May 2026.

 
\bibitem{windturbinemodels_senvionmm92}
Senvion MM92 Wind Turbine Specifications,
\textit{Wind Turbine Models},
available at \url{https://en.wind-turbine-models.com/turbines/889-senvion-mm92},
accessed on March 22, 2026.


\bibitem{seguro2000modern}
J.V. Seguro and T.W. Lambert,
\textit{Modern estimation of Weibull parameters for wind energy applications},
Wind Engineering,
24(4),
247-262,
2000.
DOI: \url{https://doi.org/10.1260/0309524001495620}.


\bibitem{carta2009review}
J.A. Carta, P. Ramírez, and S. Velázquez,
\textit{A review of wind speed probability distributions used in wind energy analysis},
Renewable and Sustainable Energy Reviews,
13(5),
933-955,
2009.
DOI: \url{https://doi.org/10.1016/j.rser.2008.02.005}.


\bibitem{ALJEDDANI202356}
Aljeddani, S.M.A. and Mohammed, M.A. A novel approach to Weibull distribution for the assessment of wind energy speed. Alexandria Engineering Journal, 78, 56-64 (2023). DOI: \url{https://doi.org/10.1016/j.aej.2023.07.027}.


\bibitem{RAMIREZ20052419}
Ramírez, P. and Carta, J.A. Influence of the data sampling interval in the estimation of the parameters of the Weibull wind speed probability density distribution: a case study. Energy Conversion and Management, 46(15), 2419-2438 (2005). DOI: \url{https://doi.org/10.1016/j.enconman.2004.11.004}.


\bibitem{White1982}
White, H. Maximum Likelihood Estimation of Misspecified Models. Econometrica, 50(1), 1-25 (1982). DOI: \url{https://doi.org/10.2307/1912526}.


\bibitem{Andrews1991}
Andrews, D.W.K. Heteroskedasticity and Autocorrelation Consistent Covariance Matrix Estimation. Econometrica, 59(3), 817-858 (1991). DOI: \url{https://doi.org/10.2307/2938229}.


\bibitem{Lutkepohl2005}
Lütkepohl, H. \textit{New Introduction to Multiple Time Series Analysis}. Springer, Berlin Heidelberg (2005). DOI: \url{https://doi.org/10.1007/978-3-540-27752-1}.


\bibitem{lehmann1998theory}
Lehmann, E.L. and Casella, G. \textit{Theory of Point Estimation}. 2nd ed. Springer New York (1998). ISBN: 978-0-387-98502-2.

\bibitem{Kalman1960}
Kalman, R.E. A New Approach to Linear Filtering and Prediction Problems. Journal of Basic Engineering, 82(1), 35-45 (1960).


\bibitem{CARTA2009933}
Carta, J.A., Ramírez, P. and Velázquez, S. A review of wind speed probability distributions used in wind energy analysis: Case studies in the Canary Islands. Renewable and Sustainable Energy Reviews, 13(5), 933-955 (2009). DOI: \url{https://doi.org/10.1016/j.rser.2008.05.005}.


\bibitem{Haessig2015}
Haessig, P., Multon, B., Ahmed, H.B., Lascaud, S. and Bondon, P. Energy storage sizing for wind power: impact of the autocorrelation of day-ahead forecast errors. Wind Energy, 18, 43-57 (2015). DOI: \url{https://doi.org/10.1002/we.1680}.


\bibitem{Risken1996}
Risken, H. \textit{The Fokker-Planck Equation: Methods of Solution and Applications}. 2nd ed. Springer, Berlin (1996). ISBN: 978-3-540-61530-9.


\bibitem{Xiong2022}
Xiong, X., Guo, X., Zeng, P., Zou, R. and Wang, X. A Short-Term Wind Power Forecast Method via XGBoost Hyper-Parameters Optimization. Frontiers in Energy Research, 10, 905155 (2022). DOI: \url{https://doi.org/10.3389/fenrg.2022.905155}.


\bibitem{Zheng2019}
Zheng, H. and Wu, Y. A XGBoost Model with Weather Similarity Analysis and Feature Engineering for Short-Term Wind Power Forecasting. Applied Sciences, 9, 3019 (2019). DOI: \url{https://doi.org/10.3390/app9153019}.


\bibitem{ChenG16}
Chen, T. and Guestrin, C. XGBoost: A Scalable Tree Boosting System. CoRR, abs/1603.02754 (2016). URL: \url{http://arxiv.org/abs/1603.02754}.


\bibitem{CasellaBerger2002}
Casella, G. and Berger, R.L. \textit{Statistical Inference}. 2nd ed. Duxbury Press (2002). ISBN: 0-534-24312-6.


\bibitem{loukatou2018stochastic}
Loukatou, A., Howell, S., Johnson, P. and Duck, P. \textit{Stochastic wind speed modelling for estimation of expected wind power output}. Applied Energy, 228, 1328-1340 (2018). DOI: \url{https://doi.org/10.1016/j.apenergy.2018.06.117}.

%
\bibitem{Benth2018}
Benth, F.E., Di Persio, L. and Lavagnini, S. Stochastic modeling of wind derivatives in energy markets. Risks, 6(2) (2018). DOI: \url{https://doi.org/10.3390/risks6020056}.


\end{thebibliography}
\end{document}